\documentclass[11pt]{article}

\usepackage[preprint]{acl}

\usepackage{times}
\usepackage{latexsym}

\usepackage[T1]{fontenc}
\usepackage[utf8]{inputenc}

\usepackage{microtype}

\usepackage{inconsolata}

\usepackage{graphicx}

\usepackage{amsmath}
\usepackage{amssymb}
\usepackage{amsthm}

\usepackage{booktabs}
\usepackage{arydshln}
\usepackage{multirow}
\usepackage{array}
\usepackage{tabularx}
\usepackage{float}
\usepackage{url}
\usepackage{hyperref}
\usepackage{cleveref}
\usepackage{algorithm}
\usepackage{algorithmic}

\newtheorem{theorem}{Theorem}

\newtheorem{proposition}{Proposition}
\newtheorem{corollary}{Corollary}
\newtheorem{remark}{Remark}

\newcommand{\rev}[1]{#1}

\newcommand{\methodours}{CROSS-MAP}
\newcommand{\methodhas}{HAS}

\title{LLMs as Linguistic Chameleons: Decoupling Semantics and Structure for Privacy-Preserving Communication}

\author{
  Yuzhu Mao \\
  Emory University \\
  \texttt{yuzhu.mao@emory.edu} \\\And
  Liang Zhao\thanks{Corresponding author.} \\
  Emory University \\
  \texttt{liang.zhao@emory.edu} \\
}

\begin{document}
\maketitle
\begin{abstract}
% Third-party large language models (\LLM s) offer strong reasoning and generation capabilities, but user prompts often contain sensitive personal or proprietary information. We study a recoverable privacy mechanism that conceals sensitive intent by \emph{topic shifting}: a local \LLM maps a sensitive prompt into a semantically consistent but domain-shifted surrogate, queries a third-party \LLM with only the surrogate, and then locally restores the answer back to the original domain using a retained mapping dictionary. We propose a practical training pipeline from supervised fine-tuning (\SFT) to direct preference optimization (\DPO) \citep{rafailov2023dpo} that improves structured-output validity, fluency, privacy, and recoverability. We introduce a metric suite that quantifies (i) dictionary coverage, (ii) semantic privacy at entity and sentence levels, (iii) fluency, and (iv) restoration quality; these signals construct preference pairs for \DPO. Finally, we evaluate instance-level privacy under non-cooperative inversion attacks where an adversary intercepts only the topic-shifted text and attempts to reconstruct the original content without access to the dictionary. Experiments on instruction-following \LLM s demonstrate that \DPO yields a better privacy--recoverability trade-off than \SFT alone while maintaining stable structured outputs required by restoration.

\end{abstract}
As Large Language Model (LLM) APIs become increasingly integrated into privacy-sensitive workflows, ensuring inference-time privacy without compromising task utility remains a major challenge. Existing approaches preserve most of the original semantic content to maintain downstream performance, but this also leaves exploitable cues for reconstructing the original text. This work investigates semantic decoupling, which replaces original semantics with alternative content while preserving the structure needed for LLM reasoning. Based on this idea, we propose CROSS-MAP, a bidirectional framework that maps private inputs into a different semantic domain before inference and recovers the corresponding outputs afterward. Local models are trained with multi-objective optimization to maximize semantic divergence in the mapping stage while minimizing semantic inconsistency in the recovery stage. Experiments show that CROSS-MAP reduces reconstruction success across multiple attack settings while outperforming existing baselines in utility. Code is available at \url{https://anonymous.4open.science/r/CROSS-MAP-7B2C/}.

\section{Introduction}
% {\color{blue}
As LLM APIs are increasingly integrated into private workflows, a growing amount of sensitive communication, from user queries to inter-agent instructions, is sent to third-party providers. Protecting textual information has therefore become a critical challenge in the era of LLMs~\cite{hongdp,kan2023protectinguserprivacyremote}. A standard view in linguistics and natural language processing (NLP) is that text decomposes into semantics (meaning-bearing elements such as entities, events, and attributes) and structure (the syntactic organization of these elements, such as word order and dependency relations)~\cite{Bloom1979LanguageDA,Hill2025DLD}. Many reasoning tasks in LLMs are largely driven by structural patterns, implying that logical relations among entities remain valid regardless of changes in semantic content~\cite{kim-linzen-2020-cogs, li-etal-2023-slog}. Since sensitive information is encoded primarily in semantic content rather than abstract structural templates, privacy leakage typically arises from semantics~\cite{frikha2025incognitext,staabbeyond}, while structure remains largely benign.

Existing privacy-preserving approaches for LLM interaction face a fundamental trade-off between privacy and utility. To maintain downstream task performance, these methods typically keep the transformed text semantically close to the original input, since inference is performed directly on the transmitted text. Methods such as differential privacy and anonymization obfuscate sensitive information by perturbing semantic content, often degrading utility for tasks that rely on fine-grained details~\citep{wuprivacy,hongdp,staab2025large}. Other approaches introduce recovery mechanisms to mitigate semantic drift~\cite{chowdhury2025pr,kan2023protectinguserprivacyremote,shen2024fire,chen2023hideseekhaslightweight}. However, since these methods operate in a semantic preserving regime, the semantic content is subject to only limited changes to retain utility, leaving exploitable cues that allow attackers to infer or reconstruct the original text~\citep{tong2025vulnerability,reconstructiondp}. For example, even after transformation, an LLM can still infer a user’s sensitive financial and medical circumstances from a statement such as \textit{Alice transferred $\$50,000$ from her savings account to pay medical debt after being diagnosed with serious disease}.

To address the dilemma inherent in semantic preserving approaches, we explore a new scenario, \textbf{semantic decoupling}, that allows LLMs to reason over preserved structural patterns while sensitive semantics remain local. Instead of transmitting sensitive content directly, the input is mapped into an \textbf{insensitive semantic domain} that preserves the structure but replaces the meaning. This allows the LLM to operate on a structure-preserving representation, where the original semantics is completely hidden and can only be restored locally afterward. For example, \textit{the thief robs the bank}, \textit{the waiter wipes the table}, and \textit{the programmer tests the software} share the same \textit{noun + verb + determiner + noun} structure despite expressing different meanings. Thus, for queries such as identifying who performs which action, the LLM can reason over the preserved structure, while only an authorized receiver with the mapping codebook (e.g., \textit{thief $\leftrightarrow$ waiter, rob $\leftrightarrow$ wipe, bank $\leftrightarrow$ table}) can recover the original text, leaving attackers observing the mapped text with little information about the sensitive content.

Historically, this paradigm has been difficult to realize. Semantic decoupling requires discovering mappings across semantic domains that preserve structural reasoning while enabling reliable recovery, which in effect requires constructing a codebook that aligns concepts across domains. Identifying such mappings demands broad linguistic and conceptual knowledge, which was largely infeasible before the emergence of modern foundation models. Recent LLMs show a strong ability to model cross-domain semantic relationships, making automated mapping increasingly plausible. However, for privacy, this capability must run locally rather than through a third-party API, since sending sensitive inputs externally would defeat the purpose of semantic decoupling. Meanwhile, general foundation models are often too large for practical local deployment and are not trained for semantic-decoupling mapping and recovery. A practical solution must therefore meet three requirements simultaneously: efficient local deployment, strict locality to avoid external exposure of private inputs, and adaptability to semantic-decoupling tasks.

This paper explores the extent to which semantic decoupling, which replaces sensitive semantic content while preserving structural reasoning patterns, can move beyond the privacy–utility dilemma inherent in semantics-preserving approaches. Specifically, we first formulate the semantics-preserving and semantic-decoupling paradigms and analyze their respective advantages from an information-theoretic perspective, characterizing when semantic decoupling offers a better privacy–utility trade-off. Second, we propose CROSS-MAP, a compact local framework for semantic-decoupling text mapping and recovery. With training objectives that encourage large semantic shifts while maintaining consistent recovery, compact local models such as Qwen-2.5-7B can be adapted to perform semantic decoupling through multi-objective optimization, achieving both stronger privacy and higher utility than semantics-preserving baselines. Finally, in addition to the defense-centric view, we adopt an attack-centric view of security by introducing an optimization-based attack to evaluate the robustness of semantic-decoupling systems against white-box adversaries, including scenarios with partial codebook leakage.
% }

% \vspace{-10pt}
\section{Related Work}
This section provides a brief overview of related work. A more detailed review is provided in Appendix~\ref{sec:appendix_related_work}.

\subsection{Differential Privacy}

Differential Privacy (DP) approaches for LLMs inject noise into data, gradients, or generation processes to avoid directly exposing the original private dataset to an external model~\citep{abadi2016deep,wuprivately,yue2023synthetic,wang2025rewardds,wuprivacy,utpala2023locally,flemings2024differentially,hongdp}. However, DP noise can significantly degrade model performance. To preserve downstream utility, existing methods often either restrict DP protection to less critical components~\citep{chowdhury2025pr,tong2025inferdpt} or focus on tasks that are relatively insensitive to fine-grained details, such as sentiment analysis~\citep{mattern2022differentially,kurakin2023harnessing}.

\vspace{-5pt}
\subsection{Text Anonymization and Sanitization}
Text anonymization aims to transform text to reduce the risk of revealing personal information (e.g., identity or sensitive attributes) while preserving downstream utility~\citep{pilan2022text}. Prior work demonstrates that modern LLMs have the potential to both anonymize and deanonymize text, highlighting risks of malicious re-identification attacks~\citep{patsakis2023man,staabbeyond,staab2025large}. Text sanitization extends beyond protecting personal identity information to broader sensitive content~\citep{yue2021differential,li2025papillon,chen2023customized}. To further improve downstream utility, some studies introduce an explicit desanitization step to form a bidirectional framework~\citep{kan2023protectinguserprivacyremote, shen2024fire,chen2023hideseekhaslightweight,chowdhury2025pr}. This line of works share a consensus that the mapped text and original text must be within the same semantic space for more accurate LLM inference.

\section{Preliminaries}
This section presents necessary knowledge for understanding the difference between the proposed method and related work.

% \subsection{Background}
% Framework that contains both mapping and recovery processes is termed as \textbf{bidirectional privacy-preserving framework} in this work. Let $\mathcal{X}$ be the space of original texts and $\mathcal{Y}$ the space of mapped texts. A mapping process $\mathcal{M}$ produces a mapped output $y \sim \mathcal{M}(x)$ from the original text $x$. A recovery process $\mathcal{R}$ produces a recovered text $\tilde{x} \sim \mathcal{R}(y)$ from $y$.

% \vspace{-5pt}
% \paragraph{Threat model.}
% Attacks for recovering the original context $x$ under the mapping mechanism can be categorized into black-box and white-box types. In a black-box attack, the attacker only observes the mapped output $y$ and attempts to infer the original content by producing a guess $\hat{x} \sim \mathcal{A}(y)$, where $\mathcal{A}$ denotes an attack algorithm that maps the observation $y$ to candidate reconstructions $\hat{x}$. In a white-box attack, the attacker additionally knows the mapping mechanism $\mathcal{M}$. Therefore, instead of relying purely on heuristic guessing, the attacker can actively search for inputs $\hat{x}$ whose mapped outputs are consistent with the observation $y$.

\subsection{Problem Formulation}

Framework that contains both mapping and recovery processes is termed as \textbf{bidirectional privacy-preserving framework} in this work. Let $\mathcal{X}$ be the space of original texts and $\mathcal{Y}$ the space of mapped texts. A mapping process $\mathcal{M}$ produces a mapped output $y \sim \mathcal{M}(x)$ from the original text $x$. A recovery process $\mathcal{R}$ produces a recovered text $\tilde{x} \sim \mathcal{R}(y)$ from $y$. Privacy-preserving communication with LLMs aims to maintain the downstream task utility while protecting the original input $x$. Without protection, the LLM outputs a model response $x_{\mathrm{a}}=\mathcal{LLM}(x)$ upon receiving the input $x$. But to prevent the LLM from directly observing the original input $x$, the mapping mechanism $\mathcal{M}$ first conducts the mapping $x \mapsto y$ and then sends $y \in \mathcal{Y}$ to the LLM. Given the mapped input $y$, the LLM outputs a model response $y_{\mathrm{a}}=\mathcal{LLM}(y)$.

After receiving the model response $y_{\mathrm{a}}$ from the LLM, the recovery mechanism $\mathcal{R}$ produces a recovered response $\tilde{x}_{\mathrm{a}}=\mathcal{R}(y_{\mathrm{a}})$.The downstream task utility requires the recovered response $\tilde{x}_{\mathrm{a}}$ to be consistent with the ground-truth response $x_{\mathrm{a}}$:
\vspace{-6pt}
\[
\mathrm{Sim}(x_{\mathrm{a}}, \tilde{x}_{\mathrm{a}}) \ge 1-\epsilon,
\vspace{-6pt}
\]
where $\mathrm{Sim}(\cdot,\cdot)$ is a text similarity measure, e.g., embedding cosine, BERTScore, or an LLM-judge. The threshold $\epsilon$ is a measure for the drift in the model response induced by the mapping and recovery processes.

\subsection{Two Mapping Paradigms}
According to linguistic research, language is comprised of three primary components: form, content, and use~\cite{Bloom1979LanguageDA,Hill2025DLD}. Form covers syntax, morphology, and orthography. Content denotes semantics, while use refers to pragmatics, which relates to the context of communication. Consequently, any written text $x$ itself can be decomposed into structural component $x_c$ (e.g., sentence organization, word worder, and relationships between words) and semantic element $x_s$ (e.g., entities, phrases, and key spans). 

% Under such decomposition, the similarity between text $x$ and $x'$ can be decomposed into the similarity between the structural components $x_c$ and $x'_c$ and the similarity between the semantic elements $x_s$ and $x'_s$:
% \[
%   \begin{aligned}
%   \mathrm{Sim}(x,x')
% = &\quad \alpha\,\mathrm{Sim}_{\text{semamtic}} \!\left(x_s,x'_s\right)
% \\
% & + (1-\alpha)\,\mathrm{Sim}_{\text{structural}} \!\left(x_c,x'_c\right),
%   \end{aligned}
% \]
% where $\alpha\in[0,1]$ controls the trade-off between semantic and structural similarity.

Under such decomposition, we present the key difference between existing mapping mechanisms, which is summarized as \textit{semantic-preserving mapping}, and the proposed \textit{semantic-decoupling mapping}: The goal of \textit{semantic-preserving mapping} can be formulated as solving the following optimization problem
\vspace{-8pt}
\[
\max_{\mathcal{M},\mathcal{R}}
\; \mathrm{Sim} \!\left(\mathcal{LLM}(x),\;\mathcal{R}\!\left(\mathcal{LLM}(\mathcal{M}(x))\right)\right),
\vspace{-6pt}
\]
subject to the \textit{semantic similarity constraint}:
\vspace{-6pt}
\[
\mathrm{Sim} \!\left(x_s,\;\mathcal{M}(x_s)\right) \ge 1 - \epsilon_{s}.
\]

\vspace{-6pt}
In contrast, the goal of \textit{semantic-decoupling mapping} can be formulated as subject to the \textit{structural similarity constraint}:
\vspace{-6pt}
\[
\mathrm{Sim} \!\left(x_c,\;\mathcal{M}(x_c)\right) \ge 1 - \epsilon_{c}.
\vspace{-6pt}
\]

% \vspace{-7pt}
% \[
% \max_{\mathcal{M},\mathcal{R}}
% \; \mathrm{Sim} \!\left(\mathcal{LLM}(x),\;\mathcal{R}\!\left(\mathcal{LLM}(\mathcal{M}(x))\right)\right),
% \]

% subject to the \textit{structural similarity constraint}:
% \[
% \mathrm{Sim}_{\text{structural}} \!\left(x_c,\;\mathcal{M}(x_c)\right) \ge \epsilon_{c}.
% \]

% \paragraph{Mapping functions.}
% A mapping mechanism $M$ transforms an original text $x$ into a mapped text $y$.
% Using the structural decomposition above, the mapping can be written as
% \[
% M(x) = \big(M_c(c(x)),\,M_h(h(x))\big),
% \]
% where $M_c$ operates on structure and $M_h$ operates on lexical–semantic
% content.

Briefly speaking, both mapping paradigms aim to maximize the similarity between the ground-truth LLM response $x_{\mathrm{a}}=\mathcal{LLM}(x)$ and the recovered response $\tilde{x}_{a}=\mathcal{R}(\mathcal{LLM}(\mathcal{M}(x)))$. \textit{Semantic-preserving mapping} requires the mapped text $\mathcal{M}(x)$ to remain semantically similar to the original text $x$. For example, \textit{the number increases rapidly} $\mapsto$ \textit{the value rises quickly}. However, the \textit{semantic-decoupling mapping} removes this constraint on semantic similarity. Instead, it requires structure invariance between the original text $x$ and the mapped text $\mathcal{M}(x)$. For example, \textit{The number increases rapidly} $\mapsto$ \textit{The rain drops heavily}.

% In the next section, the goal of these two mapping paradigms is theoretically analyzed via an information bottleneck formulation.

% =========================
% Theory
% =========================
\section{An Information-Theoretic View}
\label{sec:theory}
The difference between \textit{semantic-preserving} and \textit{semantic-decoupling} mappings can be understood from an information-theoretic perspective. Recall the mapping pipeline $(x_c, x_s) \xrightarrow{\mathcal{M}} (y_c, y_s) \xrightarrow{\mathcal{LLM}} z$, where $x_c$ and $x_s$ are the structural and semantic components of the original text $x$, $y_c$ and $y_s$ are the structural and semantic components of the mapped text $y$, and $z$ is the LLM response on $y$.

\vspace{-6pt}
\paragraph{Information bottleneck objective.} Searching for the optimal mapping mechanism $\mathcal{M}$ is equivalent to solving the following optimization problem:
\vspace{-8pt}
% The mapping process can be interpreted as solving an optimization problem governed by the information bottleneck objective
\begin{equation}
\mathcal{G}(\mathcal{M}) = \min_{\mathcal{M}} I(x;y) - \beta I(y;z),
\vspace{-6pt}
\label{eq:ib_objective}
\end{equation}
where $I(x;y)$ quantifies the amount of information from the original text retained within the mapped text, and $I(y;z)$ measures the task-relevant information preserved in the mapped text for the downstream LLM. \rev{For readability, Table~\ref{tab:core_notation} summarizes the core notation used in the main text.}

\begin{table}[t]
\centering
\small
\begin{tabularx}{\linewidth}{lX}
\toprule
\rev{Symbol} & \rev{Meaning} \\
\midrule
\rev{$x, y, z$} & \rev{original text, mapped text, and LLM response} \\
\rev{$x_s, x_c$} & \rev{semantic and structural components of $x$} \\
% \rev{$d_{\mathrm{text}}$} & \rev{text-level text distance} \\
% \rev{$d_{\mathrm{dict}}$} & \rev{dictionary-level text distance} \\
\rev{$\lambda_s, \lambda_c$} & \rev{task dependence on semantic and structural information, respectively} \\
\rev{$\mathcal{M}_{\mathrm{SD}}, \mathcal{M}_{\mathrm{SP}}$} & \rev{semantic-decoupling~/semantic-preserving mapping mechanisms} \\
\rev{$\Delta\rho_s, \Delta\rho_c$} & \rev{how much less semantic and structural information from the original text is preserved by $\mathcal{M}_{\mathrm{SD}}$ than by $\mathcal{M}_{\mathrm{SP}}$ in the mapped text (i.e., $\mathcal{M}_{\mathrm{SD}}$'s compression gain)} \\
\rev{$\Delta\eta_s, \Delta\eta_c$} & \rev{how much less task-relevant semantic and structural information is preserved by $\mathcal{M}_{\mathrm{SD}}$ than by $\mathcal{M}_{\mathrm{SP}}$ in the mapped text (i.e., $\mathcal{M}_{\mathrm{SD}}$'s task-relevant information loss)} \\
\bottomrule
\end{tabularx}
\vspace{-5pt}
\caption{\rev{Core notation used in the main text. Other notations used in the proof are defined in Appendix~\ref{sec:proof_ib_sd_advantage}.}}
\label{tab:core_notation}
\vspace{-15pt}
\end{table}

\begin{theorem}
  \label{thm:ib_sd_advantage}
  Let $\mathcal{M}_{\mathrm{SP}}$ and $\mathcal{M}_{\mathrm{SD}}$
  denote semantic-preserving and semantic-decoupling mappings
  respectively under the same task distribution over inputs
  $x$. Assume semantic and structural information is indepdent in any text. Then, the information bottleneck objective satisfies
  \vspace{-8pt}
  \[
  \mathcal{G}(\mathcal{M}_{\mathrm{SD}}) < \mathcal{G}(\mathcal{M}_{\mathrm{SP}}),
  \vspace{-5pt}
  \]
  % \begin{align*}
  % &\big(I^{\mathcal{M}_{\mathrm{SD}}}(x;y)-\beta I^{\mathcal{M}_{\mathrm{SD}}}(y;z)\big)
  % \\
  % \le &\big(I^{\mathcal{M}_{\mathrm{SP}}}(x;y)-\beta I^{\mathcal{M}_{\mathrm{SP}}}(y;z)\big),
  % \label{eq:ib_order}
  % \end{align*}
  i.e., $\mathcal{M}_{\mathrm{SD}}$ achieves a lower
  information bottleneck objective than $\mathcal{M}_{\mathrm{SP}}$, if the downstream task's semantic-dependence coefficient $\lambda_s$ is smaller than this threshold:
  \begin{equation}
  \lambda_{s} < \frac{r \cdot \Delta \rho_{s} + (1 - r) \Delta \rho_{c} - \beta' \cdot \Delta \eta_{c}}{\beta' (\Delta \eta_{s} - \Delta \eta_{c})},
  \label{eq:ib_condition}
  \vspace{-5pt}
  \end{equation}
  where $\beta'$ and $r$ are some constants. The detailed definitions of involved quantities and a full derivation are provided in Appendix~\ref{sec:proof_ib_sd_advantage}.
  \end{theorem}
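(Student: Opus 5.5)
The plan is to make the information bottleneck objective in \eqref{eq:ib_objective} additive across the semantic and structural channels, and then reduce $\mathcal{G}(\mathcal{M}_{\mathrm{SD}}) < \mathcal{G}(\mathcal{M}_{\mathrm{SP}})$ to a linear inequality in $\lambda_s$.

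\textbf{Step 1: split the privacy term.} Write $x=(x_c,x_s)$ and $y=(y_c,y_s)$. Under the independence assumption, and assuming the mapping acts componentwise, the chain rule gives
\[
I(x;y)=I(x_s;y_s)+I(x_c;y_c).
\]
I would parametrize each piece as a fraction of the available entropy: $I(x_s;y_s)=\rho_s H(x_s)$ and $I(x_c;y_c)=\rho_c H(x_c)$. Setting $r=H(x_s)/H(x)$ and normalizing by $H(x)$ yields
\[
I(x;y)/H(x)=r\rho_s+(1-r)\rho_c .
\]
The difference between the two mappings is then
\[
\bigl[I^{\mathrm{SP}}(x;y)-I^{\mathrm{SD}}(x;y)\bigr]/H(x)=r\Delta\rho_s+(1-r)\Delta\rho_c .
\]

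\textbf{Step 2: split the utility term.} I would model the task as drawing on semantic and structural information with weights $\lambda_s$ and $\lambda_c=1-\lambda_s$. This gives
\[
I(y;z)/H(x)=\kappa\bigl(\lambda_s\eta_s+\lambda_c\eta_c\bigr),
\]
where $\eta_s,\eta_c$ are the fractions of task-relevant semantic and structural information that survive in $y$. The normalizing constant $\kappa$ is absorbed into $\beta'=\beta\kappa$. The utility loss of $\mathcal{M}_{\mathrm{SD}}$ relative to $\mathcal{M}_{\mathrm{SP}}$ is then
\[
\beta'\bigl(\lambda_s\Delta\eta_s+(1-\lambda_s)\Delta\eta_c\bigr).
\]

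\textbf{Step 3: compare.} The condition $\mathcal{G}^{\mathrm{SD}}<\mathcal{G}^{\mathrm{SP}}$ is equivalent to the compression gain exceeding the weighted utility loss:
\[
r\Delta\rho_s+(1-r)\Delta\rho_c>\beta'\Delta\eta_c+\beta'\lambda_s(\Delta\eta_s-\Delta\eta_c).
\]
This is linear in $\lambda_s$. Dividing by $\beta'(\Delta\eta_s-\Delta\eta_c)$ gives exactly \eqref{eq:ib_condition}.

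The division is only valid if $\Delta\eta_s>\Delta\eta_c$. I would justify this from the constraint definitions: $\mathcal{M}_{\mathrm{SD}}$ enforces structural similarity and drops semantic similarity, so it loses much more semantic than structural task information. The case where this quantity is nonpositive would be noted separately; there the inequality holds either for all $\lambda_s$ or for none.

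\textbf{Main obstacle.} The hard part is Step 2. $I(y;z)$ does not decompose additively in general, because $z$ depends jointly on $y_c$ and $y_s$, so interaction information may appear. I would first try to justify additivity using the independence assumption together with a task model in which $z$ depends on $(y_c,y_s)$ through separable channels. Failing that, I would simply define $\lambda_s,\lambda_c$ as the shares of $I(y;z)$ attributable to each component under the decomposition, which makes additivity hold by construction.
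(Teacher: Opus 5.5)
Your proposal is correct and follows the paper's proof step for step. The paper writes the compression gain as $H(x)\bigl(r\Delta\rho_s+(1-r)\Delta\rho_c\bigr)$ and the task loss as $I(x;z)\bigl(\lambda_s\Delta\eta_s+\lambda_c\Delta\eta_c\bigr)$, so your $\kappa$ is $I(x;z)/H(x)$ and $\beta'=\beta I(x;z)/H(x)$; it then solves the same linear inequality in $\lambda_s$. Along the way it simply asserts $I(x;y)=I(x_s;y_s)+I(x_c;y_c)$ and $I(y;z)=I(y_s;z)+I(y_c;z)$, and divides by $\beta'(\Delta\eta_s-\Delta\eta_c)$ without checking its sign, so your caveats are assumptions the paper uses silently.

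One correction to your side remark: if $\Delta\eta_s<\Delta\eta_c$, the condition does not become all-or-none but flips into a lower bound on $\lambda_s$; only $\Delta\eta_s=\Delta\eta_c$ gives the all-or-none case. Also, if you use your fallback, put the decomposition into $\eta_s,\eta_c$ rather than $\lambda_s,\lambda_c$, since $\lambda_s$ must be a task property shared by both mappings for the final condition to be a single threshold on one $\lambda_s$.
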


\vspace{-8pt}
\begin{corollary} $\mathcal{M}_{\mathrm{SD}}$ achieves a lower
  information bottleneck objective than $\mathcal{M}_{\mathrm{SP}}$, if its gains in compression outweigh its task-relevant information loss in the following way:
  \vspace{-8pt}
  \[
    r \cdot \Delta \rho_s + (1 - r) \Delta \rho_c > \beta' (\Delta \eta_s \cdot \lambda_s + \Delta \eta_c \cdot \lambda_c).
  \vspace{-5pt}
  \]
\end{corollary}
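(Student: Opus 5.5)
The plan is to decompose both mutual-information terms of the objective~\eqref{eq:ib_objective} along the semantic/structural split, then compare the two mappings term by term.

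\textbf{Step 1: decompose the compression term.} Under the independence assumption, $x=(x_c,x_s)$ and $y=(y_c,y_s)$ with $x_s\perp x_c$. We model the mapping as acting componentwise, i.e.\ $y_s$ depends on $x$ only through $x_s$, and $y_c$ only through $x_c$. The chain rule then gives
\[
I(x;y)=I(x_s;y_s)+I(x_c;y_c).
\]
We write $I(x_s;y_s)=\rho_s H(x_s)$ and $I(x_c;y_c)=\rho_c H(x_c)$, where $\rho_s,\rho_c\in[0,1]$ are retention fractions. We then normalize by $H(x)=H(x_s)+H(x_c)$ and set $r=H(x_s)/H(x)$. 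The compression term becomes $H(x)\,[r\rho_s+(1-r)\rho_c]$. The compression gains are $\Delta\rho_\bullet=\rho^{\mathrm{SP}}_\bullet-\rho^{\mathrm{SD}}_\bullet$, and both are nonnegative.

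\textbf{Step 2: decompose the relevance term.} We model the task as depending on semantics and structure with weights $\lambda_s+\lambda_c=1$, so that
\[
I(y;z)=H(x)\,[\lambda_s\eta_s+\lambda_c\eta_c],
\]
where $\eta_\bullet$ is the task-relevant fraction preserved in each channel. The factor $H(x)$, together with any normalization of the task information, is absorbed into $\beta'=\beta\cdot(\text{normalizer})/H(x)$. We set $\Delta\eta_\bullet=\eta^{\mathrm{SP}}_\bullet-\eta^{\mathrm{SD}}_\bullet$. We expect $\Delta\eta_s>\Delta\eta_c$, since SP preserves semantics while SD only preserves structure.

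\textbf{Step 3: take the difference.} Combining Steps~1 and~2,
\[
\frac{\mathcal{G}(\mathcal{M}_{\mathrm{SP}})-\mathcal{G}(\mathcal{M}_{\mathrm{SD}})}{H(x)}
= r\Delta\rho_s+(1-r)\Delta\rho_c-\beta'(\lambda_s\Delta\eta_s+\lambda_c\Delta\eta_c).
\]
Positivity of this expression is exactly the Corollary. Substituting $\lambda_c=1-\lambda_s$ turns it into the linear inequality
\[
\lambda_s\,\beta'(\Delta\eta_s-\Delta\eta_c)<r\Delta\rho_s+(1-r)\Delta\rho_c-\beta'\Delta\eta_c .
\]
Dividing by the positive quantity $\beta'(\Delta\eta_s-\Delta\eta_c)$ yields~\eqref{eq:ib_condition}.

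\textbf{Main obstacle.} The hard part is Step~2: justifying the additive split of $I(y;z)$. The response $z$ is a function of the whole of $y$, so in general there are synergy terms and $I(y;z)\neq I(y_s;z)+I(y_c;z)$. I would try two things, in order.
\begin{itemize}
\item First, assume that $z$ decomposes into task components driven separately by $y_s$ and $y_c$, with $z_s\perp z_c$. This inherits independence and makes the split exact.
\item Failing that, state the additive form as a modeling definition of $\lambda$ and $\eta$, with the interaction information absorbed into $\beta'$.
\end{itemize}

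I would also explicitly require $\Delta\eta_s>\Delta\eta_c$ for the division in Step~3. If this inequality reverses, the direction of the bound flips, so the Corollary's form is the safer primary statement.
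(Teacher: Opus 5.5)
Your proposal is correct and follows the paper's own argument: split $I(x;y)$ and $I(y;z)$ additively over the semantic and structural components, rewrite them through $\rho_\bullet$, $r$, $\eta_\bullet$, $\lambda_\bullet$, and divide $\mathcal{G}(\mathcal{M}_{\mathrm{SP}})-\mathcal{G}(\mathcal{M}_{\mathrm{SD}})$ by $H(x)$. The one slip is the normalizer in Step 2, which in the paper is $I(x;z)$ rather than $H(x)$: its definitions $\eta_s=I(y_s;z)/I(x_s;z)$ and $\lambda_s=I(x_s;z)/I(x;z)$ give $I(y;z)=(\eta_s\lambda_s+\eta_c\lambda_c)\,I(x;z)$ and hence $\beta'=\beta\,I(x;z)/H(x)$, as your ``normalizer'' remark anticipates, while the paper simply asserts from $x_s\perp x_c$ alone the two additive splits that you correctly flag as needing extra componentwise/factorization assumptions.
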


% {\color{blue}
\vspace{-6pt}
\begin{remark}
Eq.~\eqref{eq:ib_condition} formalizes an intuitive trade-off that $\mathcal{M}_{\mathrm{SD}}$ is advantageous when it brings substantial compression gains while only mildly harming the task-relevant information needed by the downstream task: In the numerator, larger $\Delta \rho_s$ or $\Delta \rho_c$ means that $\mathcal{M}_{\mathrm{SD}}$ removes more semantic or structural information from the original text than $\mathcal{M}_{\mathrm{SP}}$, leading to a stronger compression and privacy advantage. This enlarges the threshold, so $\mathcal{M}_{\mathrm{SD}}$ remains preferable for a wider range of tasks. The weight $r$ reflects where most of the entropy in the original input lies. When $r$ is larger, the information carried by the original text is dominated by its semantic content.

By contrast, larger $\Delta \eta_s$ means that semantic decoupling discards more task-relevant semantic information, which increases the denominator and thus lowers the threshold. Larger $\Delta \eta_c$ has a similar effect through the term $-\beta' \Delta \eta_c$ in the numerator. Finally, a larger $\beta'$ implies that downstream utility is weighted more heavily relative to compression, so the threshold becomes more stringent. 
\end{remark}
% }

\paragraph{Privacy benefits with compression.} According to the variational information bottleneck (VIB) framework, the mutual information terms in Equation~\eqref{eq:ib_objective} can be bounded as follows:
\vspace{-5pt}
\[
I(x; y) = \mathbb{E}_{y \sim p(y)} \left[ {KL} ( p(x|y) \parallel p(x) ) \right],
\vspace{-5pt}
\]
and
\vspace{-5pt}
\[
I(y; z) \geq \mathbb{E}_{y, z} \left[ \log q(z|y) \right],
\vspace{-4pt}
\]
where $q(z|y)$ is an auxiliary variational distribution used to approximate the posterior $p(z|y)$. On one hand, $\mathcal{M}_{\mathrm{SD}}$ reduces $I(x; y)$ by decoupling the semantic information of $x$ and $y$, which encourages smaller KL divergence between $p(y|x)$ and $p(y)$. Equivalently, the attacker posterior $p(x|y)$ becomes closer to the true prior $p(x)$ on average, implying that the mapped output $y$ reveals less usable information about the original text $x$. Therefore, this compression directly enhances privacy against attackers observing $y$. \footnote{An in-depth analysis of the connection between KL divergence and posterior-based attack success rate is provided in Appendix~\ref{sec:appendix_asr}.}

On the other hand, such privacy benefits of $\mathcal{M}_{\mathrm{SD}}$ come at the cost of dropping more task-relevant semantic information compared to $\mathcal{M}_{\mathrm{SP}}$, which might impact the $q(z|y)$ modelled by the downstream LLM. As implied by Equation~\eqref{eq:ib_condition}, if the downstream task's dependence on semantic information can tolerate such loss, $\mathcal{M}_{\mathrm{SD}}$ dominates $\mathcal{M}_{\mathrm{SP}}$ with compression benefits outweighing the task information loss. 

\section{CROSS-MAP}\label{sec:method}
This section presents CROSS-MAP, a semantic-decoupling framework designed for cross-domain mapping and recovery.

\begin{figure*}[t]
  \centering
  \vspace{-18pt}
  \includegraphics[width=0.9\textwidth]{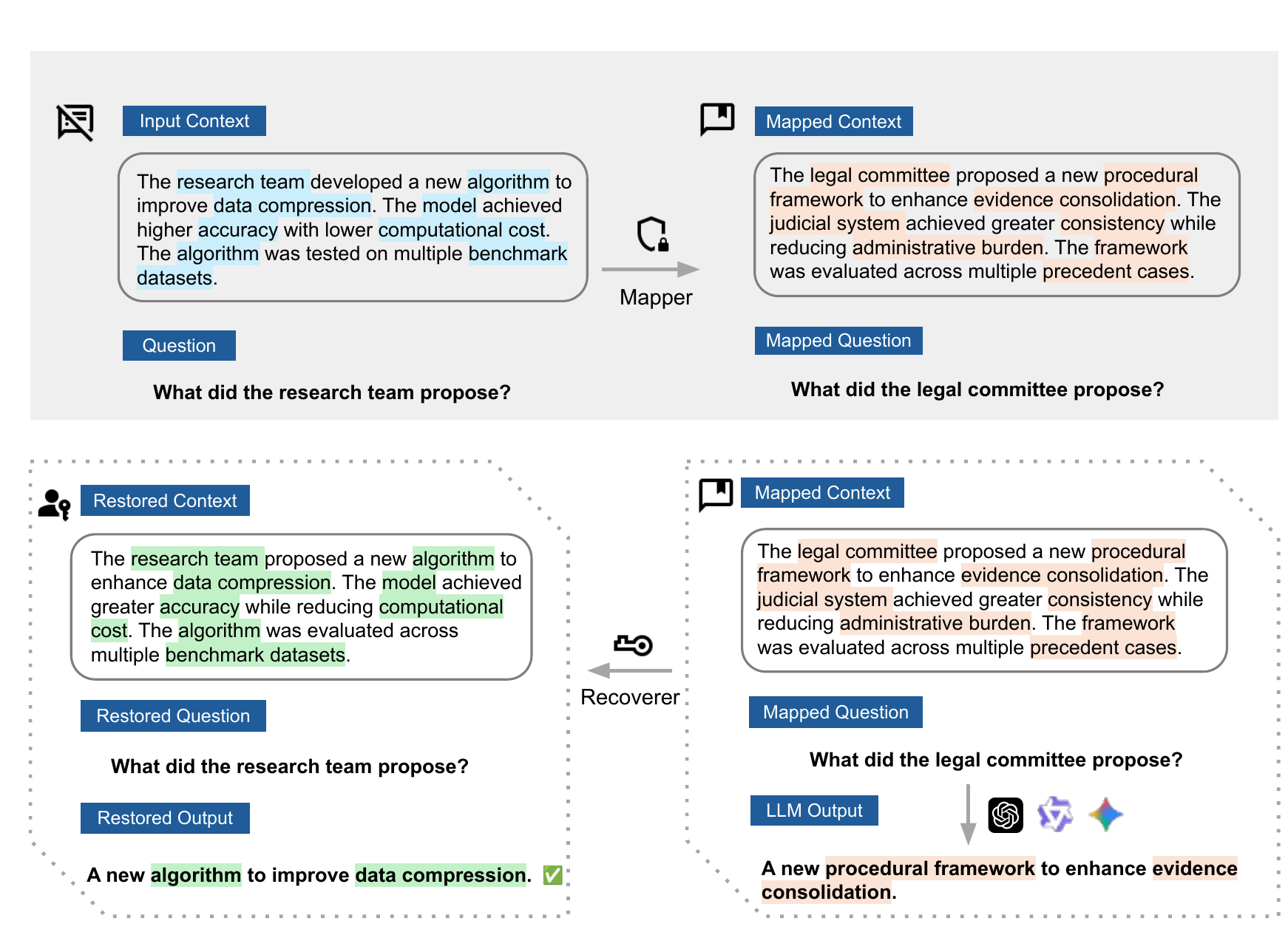}
  \vspace{-8pt}
  \caption{Illustration of inference-time protection with CROSS-MAP.}
  \label{fig:pipeline}
  \vspace{-15pt}
\end{figure*}

\subsection{Key Components}\label{subsec:key_components}
To address the challenges of highly-unstructured text inputs, two key designs are needed: (i) a structured mapping and a recovery module, and (ii) a verifiable reward and metric design.

\paragraph{Structured mapping and recovery.}\label{subsec:structured_mr}
Given an original input $x$, the mapper $\mathcal{M}_\phi$ produces a
cross-domain representation $(\tau, D, y) \sim \mathcal{M}_\phi(x)$ consisting of a target-domain theme $\tau$, a dictionary $D$,
and a mapped text $y$. Here
$D=\{(s_i,\hat s_i)\}_{i=1}^{m}$ maps each original span $s_i$ to a target-domain span $\hat s_i$. The recoverer $\mathcal{R}_\psi$ reconstructs the restored text $\tilde{x} \sim \mathcal{R}_\psi(\tau,D,y)$ from $(\tau,D,y)$. Both $\mathcal{M}_\phi$ and $\mathcal{R}_\psi$ are implemented as instruction-following LLMs with structured outputs
(e.g., JSON).
% The mapped text $y$ is written in the semantic space induced by $\tau$ and should apply $D$ consistently.

\vspace{-5pt}
\paragraph{Verifiable reward.}
A compact set of score metrics is needed to guide the training of the mapper and the recoverer:
\vspace{-6pt}
\begin{itemize}
  \item \textbf{Dictionary coverage $c_{\mathrm{dict}}(x,D)$.}
  Let $x=(x_1,\ldots,x_n)$ be the token sequence of length $n$. Each original span
  $s_i$ in $D$ corresponds to a contiguous index interval $I_i \subseteq \{1,\ldots,n\}$ in $x$.
  Define the set of covered token positions as $\mathcal{I}(D) := \bigcup_{i=1}^{m} I_i$. The dictionary coverage is the fraction of tokens in $x$ that are covered by a dictionary span: $c_{\mathrm{dict}}(x,D)
  =|\mathcal{I}(D)|/n$.
  Intuitively, higher $c_{\mathrm{dict}}$ means a larger portion of the original text $x$ is explicitly handled by the mapper through $D$.
  \vspace{-5pt}
  \item \textbf{Dictionary distance $d_{\mathrm{dict}}(D)$.}
  Using a cosine similarity measure $\mathrm{Sim}(\cdot,\cdot)$ that takes in span embeddings, the distance between spans in $D$ is measured as $d_{\mathrm{dict}}(D) = 1 - \frac{1}{m}\sum_{i=1}^{m}\mathrm{Sim}(s_i,\hat{s}_i)$.

  \item \textbf{Text distance $d_{\mathrm{text}}(x,y)$.} Using a cosine similarity measure that takes in sentence embeddings, the text distance between $x$ and $y$ is measured as $d_{\mathrm{text}}(x,y) = 1 - \frac{1}{L}\sum_{j=1}^{L}\mathrm{Sim}(x_{j},y_{j})$, where $\{x_{j}\}_{j=1}^{L}$ and $\{y_{j}\}_{j=1}^{L}$ are sentence splits of $x$ and $y$.
  % \item \textbf{Recovery quality $r_{\mathrm{rec}}(x,\tilde{x})$.} The recovery quality is measured as $r_{\mathrm{rec}}(x,\tilde{x}) = \frac{1}{L}\sum_{j=1}^{L}\mathrm{Sim}(x^{(j)},\tilde{x}^{(j)})$, where $\{\tilde{x}^{(j)}\}_{j=1}^{L}$ is the sentence split of $\tilde{x}$ aligned with $\{x^{(j)}\}$.
  % \begin{equation*}
  %   s_{\mathrm{r}}(x,\tilde{x})
  %   :=
  %   \frac{1}{L}\sum_{j=1}^{L}\mathrm{Sim}\!\big(x^{(j)},\tilde{x}^{(j)}\big).
  % \end{equation*}
  
  \vspace{-4pt}
  \item \textbf{Fluency $f_{\mathrm{text}}(y)$.}
  Let $(y_1,\ldots,y_T)$ be the token sequence of the mapped text $y$. The fluency score is set to the negative log-perplexity $-\frac{1}{T}\sum_{t=1}^{T}\log p_{\mathrm{LM}}\!\big(y_t | y_{<t}\big)$, where $p_{\mathrm{LM}}$ represents the output distribution of a reference language model.
  % The perplexity of $y$ under a reference LM $p_{\mathrm{LM}}$ is:
  % \begin{equation*}
  % \begin{split}
  %   \mathrm{PPL}(y):=\exp\!\bigg( -\frac{1}{T}\sum_{t=1}^{T}\log p_{\mathrm{LM}}\!\big(y_t \mid y_{<t}\big)\bigg),
  % \end{split}
  % \end{equation*}
  % and the fluency score is set to the negative log-perplexity:
  % \begin{equation*}
  % \begin{aligned}
  %   s_{\mathrm{f}}(y)
  %   &:= -\log \mathrm{PPL}(y) \\
  %   &= \frac{1}{T}\sum_{t=1}^{T}\log p_{\mathrm{LM}}\!\big(y_t \mid y_{<t}\big).
  % \end{aligned}
  % \end{equation*}
\end{itemize}

\vspace{-4pt}
Finally, the above metrics are aggregated into a single scalar reward:
\vspace{-6pt}
\begin{equation}
  \nonumber
\begin{aligned}
  \label{eq:overall_reward}
r(x, \tau,D,y,\tilde{x})
&= w_{1}\, c_{\mathrm{dict}}(x,D)
   + w_{2}\, d_{\mathrm{dict}}(D)
   \\
& + w_{3}\, d_{\mathrm{text}}(x,y)
   + w_{4}\, f_{\mathrm{text}}(y) \\
   & + w_{5}\, (1-d_{\mathrm{text}}(x,\tilde{x})),
\end{aligned}
\vspace{-6pt}
\end{equation}
with nonnegative weights $\{w_{1}, w_{2}, w_{3}, w_{4}, w_{5}\}$. The last term measures the consistency between the original text $x$ and recovered text $\tilde{x}$

% \subsection{Overall Training Pipeline}\label{subsec:pipeline}

\subsection{Training: SFT to DPO}\label{subsec:sft_dpo}

The training pipeline has three stages: (i) seed data synthesis, (ii) supervised fine-tuning (SFT) for structured mapping/recovery,
and (iii) direct preference optimization (DPO) for multi-objective alignment.

\paragraph{Stage 1: Data synthesis.}
We first use GPT-5.2 to synthesize a small subset\footnote{These instances are available in the provided GitHub repository.} of training instances  $\mathcal{D}_{\mathrm{synth}}=\{(x \xrightarrow{\mathcal{M}} (\tau,D,y)),\; ((\tau,D,y)\xrightarrow{\mathcal{R}} \tilde{x})\}$. 
% For each $x$, GPT-5.2 produces a structured mapper package $(\tau,D,y)$ by selecting a target-domain theme $\tau$,
% constructing a dictionary $D$, and rewriting $x$ into a mapped text $y$ in the target domain.
% It then generates a recovery $\tilde{x}$ conditioned on $(\tau,D,y)$.

\paragraph{Stage 2: SFT.}
The goal is to train two instruction-following models: a private mapper $\mathcal{M}_\phi$ and a recoverer $\mathcal{R}_\psi$.
Both are initialized from the same base LLM and fine-tuned with parameter-efficient adapters.
The mapper learns to generate $(\tau,D,y)$ from $x$, while the recoverer learns to generate $\tilde{x}$ from $(\tau,D,y)$.
% Concretely, SFT minimizes the conditional language modeling objectives:
% \vspace{-6pt}
% \begin{align}
%   \mathcal{L}_{\mathrm{SFT}}^{\mathrm{map}}(\phi)
%   &:=
%   \mathbb{E}
%   \Big[-\log p_{\phi}(\tau,D,y \mid x)\Big],
%    \nonumber \\
%   \mathcal{L}_{\mathrm{SFT}}^{\mathrm{rec}}(\psi)
%   &:=
%   \mathbb{E}
%   \Big[-\log p_{\psi}(\tilde{x} \mid \tau,D,y)\Big].
%   \vspace{-6pt}
%   \nonumber
%   \end{align}
  
  % SFT distills the structured mapping and recovery behavior from the seed data and serves as the initialization for the
  % subsequent preference optimization.

% but it has two limitations: (i) to avoide overfitting the small seed dataset, the SFT is performed with limited steps, and thus the insturction-following capability of the SFT mapper to produce desired structured outputs is not reliable. (ii) SFT does not explicity optimize the trade-offs between multiple objectives, e.g., privacy and recovery. We therefore refine the mapper with DPO using preference pairs constructed by sampling and scoring.

\paragraph{Stage 3: DPO.}
SFT establishes a strong base, but it does not directly optimize the multi-objective reward. DPO is therefore employed to refine the mapper with preference pairs obtained by sampling and scoring. During preference construction, for each $x$, we sample $K$ mapped candidates $(\tau^{(k)},D^{(k)},y^{(k)}) \sim p_{\phi}(\cdot \mid x),\quad k=1,\ldots,K$, and obtain a recovery for each candidate as $\tilde{x}^{(k)} \sim p_{\psi}(\cdot \mid \tau^{(k)},D^{(k)},y^{(k)})$. The scalar reward is computed as
$
r^{(k)} := r(x, \tau^{(k)},D^{(k)},y^{(k)},\tilde{x}^{(k)})
$, and a preference pair $(k^+,k^-)$ is selected such that $r^{(k^+)} \ge r^{(k^-)} + \gamma$ for a margin $\gamma>0$. Let $o^+ := o^{(k^+)}$ and $o^- := o^{(k^-)}$ denote the preferred and dispreferred mapper outputs, and $p_{\phi_0}$ be the frozen reference model (typically the SFT checkpoint).
DPO optimizes the mapper parameters $\phi$ by increasing the relative likelihood of high-reward mapped outputs over low-reward ones.

\subsection{Workflow with LLMs as Third-Party API}\label{subsec:inference}
% Algorithm~\ref{alg:inference} summarizes the workflow of the proposed CROSS-MAP with the trained mapper and recoverer.

Figure~\ref{fig:pipeline} illustrates the workflow of the proposed CROSS-MAP with the trained mapper and recoverer. After training, the private mapper $\mathcal{M}_\phi$ and the private recoverer $\mathcal{R}_\psi$ are deployed locally
(e.g., on the user side), while the downstream LLM is accessed as a third-party API. Before the transmission, $\mathcal{M}_\phi$ first maps the original input $x$ to a mapped version $y$. The LLM then produces a response $y_{\mathrm{a}}$ and transmits it back to the recoverer. \rev{The private recoverer can subsequently recover the original-domain answer $\tilde{x}_{\mathrm{a}}$ by reasoning jointly over the dictionary and the LLM response $y_{\mathrm{a}}$. Therefore, users can interact with external APIs or proprietary services under the protection of the local mapper and recoverer via CROSS-MAP. It allows the external model to operate on an alternative semantic domain while keeping the original sensitive content local.} Algorithm~\ref{alg:inference} in Appendix~\ref{sec:appendix_algorithm} summarizes this workflow.
\vspace{-3pt}
\subsection{Adversarial Attacks}\label{subsec:attack}
\vspace{-4pt}

To evaluate the security of CROSS-MAP, we simulate a white-box attacking scenario where the attacker has access to the mapper $\mathcal{M}_\phi$ and even to a partially leaked dictionary. Assume the attacker observes a mapped output $y$ and attempts to recover the original input $x$ using a parametric attack policy $p_{\theta}(\cdot |  y)$.

% Given an original input $x$, the mapper releases a mapped output
%   $o := (\tau,D,y)\sim \mathcal{M}_{\phi}(x)$ to an untrusted environment.
%   The adversary observes the released content and attempts to recover the original text.
%   Assume an adaptive attacker that can query the released mapper as a black-box oracle and optimize its attack strategy according to the mapper's output.
%   Specifically, the attacker maintains a parametric attack policy $p_{\theta}(\cdot \mid y)$ that proposes candidate reconstructions
%   $\hat{x}$ conditioned on the observation $y$.

\vspace{-6pt}
\paragraph{Attack objective and reward.}
Given a candidate reconstruction $\hat{x} \sim p_{\theta}(\cdot| y)$, the attacker can evaluate how well it explains the observation by passing $\hat{x}$ through the mapper to obtain $\hat{o} := (\hat{\tau},\hat{D},\hat{y}) \sim \mathcal{M}_{\phi}(\hat{x})$. A successful $\hat{x}$ should be mapped to $\hat{y}$ that is close to the observed $y$.
The attack reward is therefore defined as the similarity between $\hat{y}$ and $y$:
$\hat{r}(\hat{x};y) = 1 - d_{\mathrm{text}}(\hat{y},y)$.
The attacker aims to find a reconstruction $\hat{x}$ that maximizes $\hat{r}(\hat{x};y)$.

\vspace{-8pt}
\paragraph{DPO-based attacker.}
An optimization-based attacker can align $p_\theta(\cdot | y)$ to the reward $\hat{r}(\hat{x};y)$ via DPO. For each observed $y$, multiple candidate reconstructions $\{\hat{x}^{(k)}\}_{k=1}^{K}$ are sampled from $p_{\theta}(\cdot| y)$, then mapped by $\mathcal{M}_{\phi}$ to obtain $\hat{y}^{(k)}$, and finally scored by $\hat{r}^{(k)}=1 - d_{\mathrm{text}}(\hat{y}^{(k)},y)$.
A preference pair $(\hat{x}^+,\hat{x}^-)$ is then constructed by selecting two candidates with a reward gap
$\hat{r}(\hat{x}^+;y)\ \ge\ \hat{r}(\hat{x}^-;y)+\gamma_{\mathrm{atk}}$ for a margin $\gamma_{\mathrm{atk}}>0$. A detailed theoretical analysis of the attacker in the presence of dictionary leakage is provided in Appendix~\ref{sec:appendix_attack}.
% Let $p_{\theta_0}$ be a frozen reference attack policy. One DPO update minimizes
% \vspace{-6pt}
% \begin{equation*}
%   \begin{aligned}
%   \mathcal{L}_{\mathrm{DPO}}^{\mathrm{atk}}(\theta)
%   &:= -\mathbb{E}\!\left[
%   \log \sigma\!\left(
%   \beta_{\mathrm{atk}}\left(
%   \log\tfrac{p_{\theta}(\hat{x}^+\mid y)}{p_{\theta_0}(\hat{x}^+\mid y)}
%   \right.\right.\right. \\
%   &\left.\left.\left. \qquad -
%   \log\tfrac{p_{\theta}(\hat{x}^-\mid y)}{p_{\theta_0}(\hat{x}^-\mid y)}
%   \right)\right)
%   \right],
%   \end{aligned}
%   \vspace{-6pt}
%   \end{equation*}
% where $\beta_{\mathrm{atk}}>0$ controls the sharpness of the preference. 

% =========================
% Experiments (TEMPLATE, revised)
% - no bold metric names
% - avoid first-person "We"
% - restructure sections: separate (i) main trade-off, (ii) de-anonymization (attribute inference),
%   (iii) span restoration (black/white-box), (iv) RL attack (separate from ablation), (v) ablations
% - add bridging text between section/subsection/paragraph (no empty transitions)
% =========================

\vspace{-3pt}
\section{Experiments}
\label{sec:experiments}
\vspace{-3pt}

CROSS-MAP is evaluated along two dimensions:
\rev{(i) utility on question answering (QA), summarization, and generation tasks; and}
(ii) security against de-anonymization, black-box span restoration, and optimization-based white-box attacks.
\rev{Experiments are conducted on datasets from multiple semantic domains, including medicine, finance, news, and daily life.} A full description of the experimental setup, including the datasets and models used, is provided in Appendix~\ref{sec:appendix_exp_setup}.

\vspace{-5pt}
\subsection{Utility and Privacy}
\label{subsec:main_tradeoff}

% This subsection reports the key comparison of privacy and utility performance across datasets. A good mapper and recoverer should improve privacy while preserving utility compared to baselines.

\begin{figure}[t]
  \centering
  \includegraphics[width=0.95\linewidth]{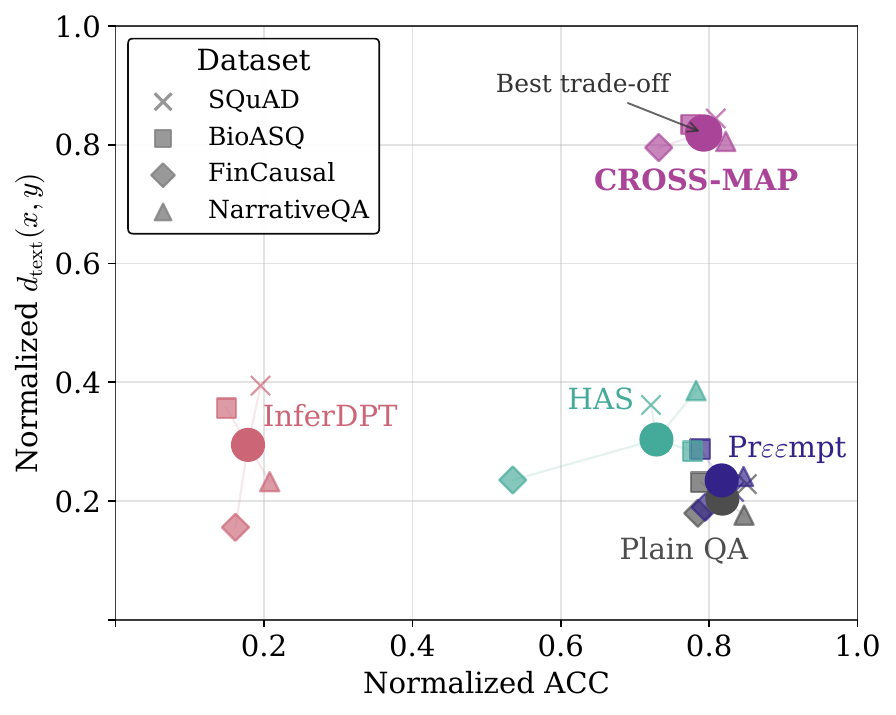}
  \vspace{-8pt}
  \caption{Privacy-utility trade-offs across methods and datasets. The round points denote the median of subpoints representing individual datasets.}
  \label{fig:pareto_tradeoff}
  \vspace{-12pt}
\end{figure}

% \paragraph{Trade-off visualization.}
Figure~\ref{fig:pareto_tradeoff} plots utility against privacy. It visualizes each method as a point in the plane of downstream QA accuracy and text distance $d_{\text{text}}(x, y)$ between original text $x$ and mapped text $y$ as described in Section~\ref{subsec:key_components}. Across datasets, Pr$\varepsilon$$\varepsilon$mpt~\citep{chowdhury2025pr} clusters near the upper-utility but low-privacy corner with Plain QA, indicating Pr$\varepsilon$$\varepsilon$mpt, which primarily targets numerical values, leaves the original text largely unchanged. InferDPT~\citep{tong2025inferdpt} typically shifts upward in privacy but drifts leftward in utility due to its lack of an explicit recovery process. HAS~\citep{chen2023hideseekhaslightweight} provides a better trade-off compared to InferDPT and Pr$\varepsilon$$\varepsilon$mpt. However, it is still inferior to CROSS-MAP, which consistently lies closest to the upper-right corner, achieving larger $d_{\text{text}}(x, y)$ than other baselines, and meanwhile maintains downstream QA performance very close to Plain QA. Table~\ref{tab:main_results_all} in Appendix~\ref{sec:appendix_exp_results} reports the concrete numbers. For example, on SQuAD, CROSS-MAP increases $d_\text{text}(x, y)$ from $0.238$ for HAS to $0.752$, while increasing QA accuracy from $91.84$ for HAS to $92.98$. Similar trends are observed on other datasets. \rev{We also report sensitive-span retention rate (the fraction of original sensitive spans retained in the mapped text) as a more direct privacy metric. Averaged across SQuAD, BioASQ, FinCausal, and NarrativeQA, CROSS-MAP retains only $3.1\%$ of sensitive spans, compared to $63.1\%$ for HAS and $100.0\%$ for Plain QA.} Table~\ref{tab:mapping_examples} in Appendix~\ref{sec:appendix_exp_results} provides examples of different mapping methods.

% \rev{Because embedding distances alone are indirect privacy proxies, we additionally measure sensitive-span retention: the fraction of original sensitive spans that survive in the mapped text. Averaged across SQuAD, BioASQ, FinCausal, and NarrativeQA, CROSS-MAP retains only $3.1\%$ of sensitive spans, compared with $63.1\%$ for HAS and $100.0\%$ for Plain QA. Across methods, datasets, and attack budgets, retention is strongly correlated with span-restoration ASR ($\rho=+0.926$, $p=1.7\times10^{-9}$), while $d_{\mathrm{text}}$ and $d_{\mathrm{dict}}$ are strongly negatively correlated with ASR ($\rho=-0.925$, $p=1.4\times10^{-14}$; $\rho=-0.823$, $p=4.1\times10^{-9}$). These results do not make embedding distance a complete privacy metric, but they validate that the reported distances track attack success in the evaluated threat models.}

\rev{To broaden the evaluation beyond QA, the SQuAD-trained mapper and recoverer are also evaluated on two non-QA task families without extra in-domain training. Table~\ref{tab:nonqa_main} summarizes the results. On CNN/DailyMail summarization, CROSS-MAP preserves summary quality, yielding a BERTScore of $0.877$ compared to $0.882$ for direct Plain LLM inference, while reducing the sensitive-span retention from $100.0\%$ to $7.3\%$. On CommonGen open-ended generation, CROSS-MAP preserves high concept coverage ($0.891$) while reducing retention from $98.3\%$ to $18.9\%$. These results support Theorem~\ref{thm:ib_sd_advantage}: semantic decoupling is strongest when the task-relevant structure is sufficient, but it can still provide a superior privacy-utility trade-off for semantic-dependent generation when the mapper performs domain-level dictionary construction and the recoverer performs soft semantic recovery instead of string replacement.}

\begin{table}[t]
  \centering
  \scriptsize
  \setlength{\tabcolsep}{2pt}
  \resizebox{\linewidth}{!}{%
  \begin{tabular}{llcccc}
    \toprule
    Task & Method & ROUGE-L & BERT/Cov. & Retention \\
    \midrule
    CNN/DM & Plain LLM  & 0.240 & 0.882 & 100.0\% \\
    CNN/DM & CROSS-MAP  & 0.285 & 0.877 & 7.3\% \\
    \midrule
    CommonGen & Plain LLM  & 0.587 & 0.932 & 98.3\% \\
    CommonGen & CROSS-MAP  & 0.414 & 0.891 & 18.9\% \\
    \bottomrule
  \end{tabular}
  }
  \caption{\rev{Generalization on summarization and generation tasks. The BERT/Cov. metric denotes BERTScore for summarization and concept coverage for generation. Retention represents the sensitive-span retention rate.}}
  \label{tab:nonqa_main}
  \vspace{-10pt}
\end{table}

% \begin{figure*}[t]
%   \centering
%   \begin{minipage}[t]{0.49\textwidth}
%     \centering
%     \includegraphics[width=\linewidth]{tradeoff.pdf}\\
%     \small\textbf{(a)} Aggregated privacy--utility scatter with Pareto frontier.
%   \end{minipage}\hfill
%   \begin{minipage}[t]{0.49\textwidth}
%     \centering
%     \includegraphics[width=\linewidth]{attr_infer_barchart.pdf}\\
%     \small\textbf{(b)} Privacy--utility trade-off per dataset, broken down by corpus.
%   \end{minipage}
%   \caption{Privacy--utility trade-off. X-axis: sentence-level privacy distance $d_{\text{sentence}}$ (higher indicates stronger privacy); Y-axis: downstream QA accuracy (higher indicates better utility).}
%   \label{fig:pareto_tradeoff2}
% \end{figure*}

% ==================================================
\subsection{Security Against Attacks}
This subsection reports and compares the security of different methods in three attack scenarios.

\subsubsection{De-anonymization Attacks}
\label{subsec:deanon_attr_infer}

LLM attribute inference is a strong way for de-anonymization. The goal is to use an LLM to infer private attributes  (e.g., gender, age, and location) from the mapped text. This subsection reports the success rate of LLM attribute inference on SynthPAI~\citep{yukhymenko2024synthetic} using GPT-4 as the attacker as in GPT-AA~\citep{staab2025large}. Figure~\ref{fig:attr_infer_barchart} shows that CROSS-MAP yields the lowest inference accuracy and certainty among all compared methods. The concrete numbers for each method are reported in Table~\ref{tab:attr_infer} in Appendix~\ref{sec:appendix_exp_results}.

\begin{figure}[t]
  \centering
  \includegraphics[width=\linewidth]{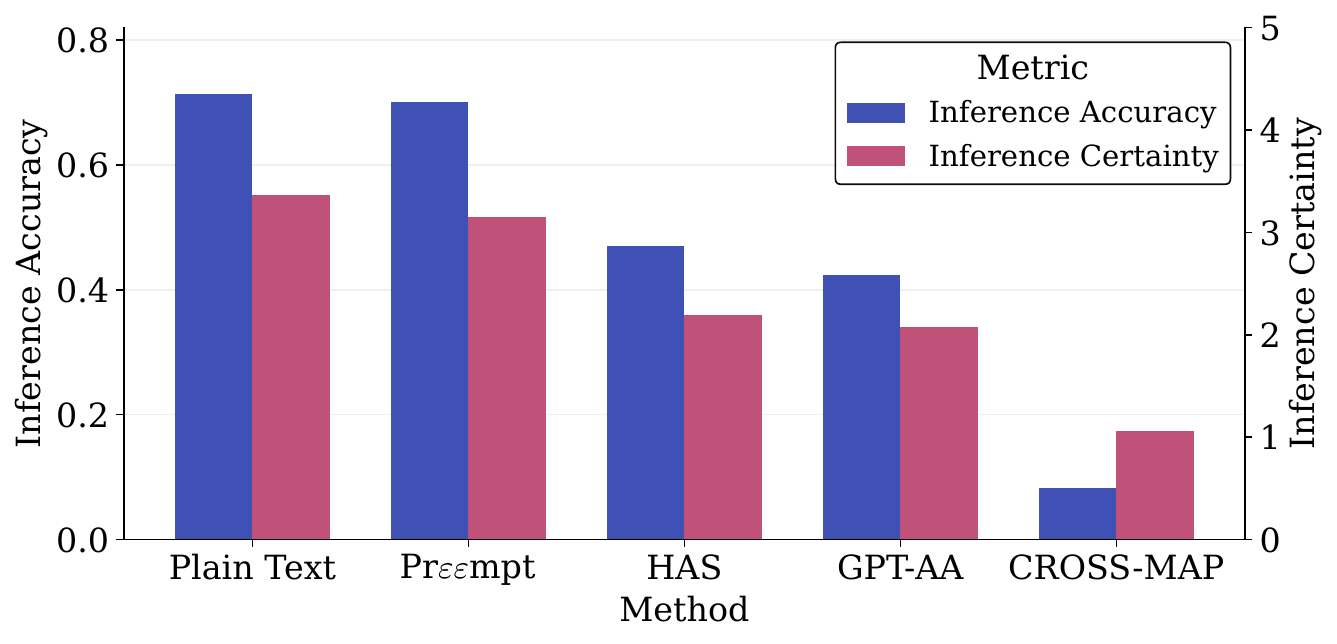}
  \vspace{-14pt}
  \caption{De-anonymization results via LLM attribute inference on SynthPAI~\citep{yukhymenko2024synthetic}, comparing attacker inference accuracy and certainty across methods.}
  \label{fig:attr_infer_barchart}
  \vspace{-12pt}
\end{figure}

% ==================================================

\begin{table*}[t]
\vspace{-10pt}
  \centering
  \scriptsize 
  \begin{tabular}{llcc|cc|cc}
    \toprule
    Dataset & Method
      & \multicolumn{2}{c|}{$k=10$}
      & \multicolumn{2}{c|}{$k=30$}
      & \multicolumn{2}{c}{$k=50$} \\
    \cmidrule(lr){3-4}\cmidrule(lr){5-6}\cmidrule(lr){7-8}
    & & ASR & \rev{Post. Mass} & ASR & \rev{Post. Mass} & ASR & \rev{Post. Mass} \\
    \midrule

    \multirow{3}{*}{FinCausal 2025~\cite{moreno2025financial}}
      & \textbf{\methodours} & \textbf{0.37$\%$} & \textbf{0.006} & \textbf{0.84$\%$} & \textbf{0.003} & \textbf{1.19$\%$} & \textbf{0.002} \\
      & \methodhas     & 5.88$\%$  & 0.116 & 35.29$\%$ & 0.026 & 35.29$\%$ & 0.026 \\
      & Plain QA       & 7.86$\%$  & 0.181 & 41.43$\%$ & 0.088 & 41.43$\%$ & 0.088 \\
    \midrule

    \multirow{3}{*}{BioASQ~\cite{tsatsaronis2015overview}}
      & \textbf{\methodours} & \textbf{1.05$\%$} & \textbf{0.004} & \textbf{1.05$\%$} & \textbf{0.004} & \textbf{1.05$\%$} & \textbf{0.004} \\
      & \methodhas     & 12.73$\%$ & 0.279 & 28.14$\%$ & 0.164 & 34.66$\%$ & 0.135 \\
      & Plain QA       & 16.29$\%$ & 0.316 & 32.81$\%$ & 0.239 & 36.12$\%$ & 0.208 \\
    \midrule

    \multirow{3}{*}{SQuAD~\cite{rajpurkar2016squad}}
      & \textbf{\methodours} & \textbf{0.68$\%$} & \textbf{0.068} & \textbf{1.22$\%$} & \textbf{0.042} & \textbf{1.49$\%$} & \textbf{0.023} \\
      & \methodhas     & 4.95$\%$  & 0.164 & 12.10$\%$ & 0.145 & 14.57$\%$ & 0.124 \\
      & Plain QA       & 5.33$\%$  & 0.209 & 15.70$\%$ & 0.171 & 15.70$\%$ & 0.171 \\
    \bottomrule
  \end{tabular}
  \vspace{-2pt}
  \caption{Black-box span restoration results across datasets, reporting span restoration accuracy as ASR and the attacker's posterior mass on the ground-truth span at different candidate list sizes $k$.}
  \label{tab:blackbox_span_restore_all}
  \vspace{-14pt}
\end{table*}

\vspace{-2pt}
\subsubsection{Black-box Span Restoration Attacks}
\label{subsec:span_restoration}
\vspace{-3pt}
Span restoration attacks aim to reconstruct the exact protected spans from the mapped text. The Attack Success Rate (ASR) is defined as the span restoration accuracy, calculated as the ratio of correctly restored spans to the total number of protected spans. Table~\ref{tab:blackbox_span_restore_all} shows that span restoration becomes easier as the candidate list size $k$ increases. Plain QA consistently yields the highest ASR, while \methodhas{} offers only limited improvement. In contrast, \methodours{} substantially reduces restoration success across all datasets and all values of $k$. This suggests that preserving semantic anchors makes posterior-based restoration easier for the attacker. \rev{The table also reports posterior mass on the ground-truth span, which is defined in Appendix~\ref{sec:appendix_asr}. Across datasets and different $k$, methods with lower posterior mass exhibit lower ASR. For instance, on FinCausal 2025 at $k=50$, \methodours{} assigns the lowest posterior mass ($0.002$) together with the lowest ASR ($1.19\%$), whereas \methodhas{} and Plain QA show higher posterior mass ($0.026$ and $0.088$) and correspondingly higher ASR ($35.29\%$ and $41.43\%$).} Figure~\ref{fig:blackbox_asr_superisal} in Appendix~\ref{sec:appendix_exp_results} visualizes such trend across datasets as $k$ varies. 

% Figure~\ref{fig:whitebox_asr} provides representative qualitative successes and failures that help explain why some methods are more vulnerable to span restoration attacks than others.

% ==================================================

\subsubsection{White-box Optimization-based Attacks}
\label{subsec:rl_attack}

The attacker is initialized from the trained mapper and then optimized using DPO algorithm as described in Section~\ref{subsec:attack}. ASR here refers to the similarity score $1-d_{\text{text}}(\hat{x},x)$ between the restored text $\hat{x}$ and the original text $x$. Table~\ref{tab:rl_attack_squad} and Table~\ref{tab:rl_attack_narrative} in Appendix~\ref{sec:appendix_exp_results} show that restoration becomes easier as the dictionary leakage ratio $\rho$ increases for both datasets, and that the DPO-based attacker consistently outperforms the base attacker without optimization at every $\rho$. Figure~\ref{fig:whitebox_asr} in Appendix~\ref{sec:appendix_exp_results} shows that higher leakage also helps DPO-based attacker to achieve larger gains over the base attacker. However, it is noticeable that even under this optimization-based white-box attacker, the improvement brought by optimization under CROSS-MAP remains bounded. Across both datasets and all leakage ratios $\rho$, the maximum relative gain of the DPO attacker over the base attacker is $29.4\%$, achieved on NarrativeQA at $\rho=0.1$ (where the DPO attacker's ASR is still only $24.2\%$). On SQuAD, the maximum relative gain is $13.6\%$ (achieved at $\rho=0.1$, with the DPO attacker's ASR still only $32.6\%$). This bounded improvement suggests that, even when the mapper's parameters are exposed and the attacker is allowed to optimize based on extra knowledge, e.g., some pairs from the mapping dictionary, the restoration advantage that can be extracted by optimization is still limited, supporting the robustness of CROSS-MAP against optimization-driven white-box restoration.

% \begin{figure}[t]
%   \centering
%   \includegraphics[width=\linewidth]{/Users/yuzhu/Projects/2025Fall/PhD-Project1/Overleaf/Association_for_Computational_Linguistics__ACL__conference/Gemini_Generated_Image_deb1dodeb1dodeb1.png}
%   \caption{Black-box span restoration examples, including sanitized answers, attacker reconstructions, and ground-truth spans that define success or failure.}
%   \label{fig:blackbox_examples}
% \end{figure}

% \paragraph{Effect of RL-based optimization.}
% On SQuAD (Table~\ref{tab:rl_attack}), the RL attacker consistently improves the mean similarity to the original text across leakage ratios $\rho$, with gains ranging from 0.039 at $\rho=0.1$ (0.287 to 0.326, a 13.6\% relative increase) to 0.094 at $\rho=0.9$ (0.773 to 0.867, a 12.2\% relative increase).
% The best-of-$M$ similarity also rises, from 0.397 to 0.935 as $\rho$ grows from 0.1 to 0.9, while the standard deviation remains below 0.046, indicating that RL yields not only higher but also relatively stable restoration quality.

% \paragraph{Analysis of reward proxy mismatch.}
% A key difficulty is the potential mismatch between the reward proxy and the true restoration objective:
% a near-correct guess at the original text may still yield large differences after forward mapping, which can distort optimization signals.

% ==================================================
\subsection{Ablation Study}
\label{subsec:ablation}
\rev{The ablation studies in this paper systematically evaluate the key factors impacting model deployment, including input complexity, paraphrase robustness, target-domain selection, dictionary coverage, training scale, external LLM capabilities, mapper/recoverer architecture, and training strategies. These findings offer empirical guidelines for optimal experimental configurations. Detailed tables and extended analyses are provided in Appendix~\ref{sec:appendix_exp_results}.}

\vspace{-6pt}
\section{Conclusion}
\vspace{-6pt}
This paper proposes CROSS-MAP, a semantic-decoupling framework for inference-time text protection. By replacing semantic content while preserving the structural patterns needed for downstream LLM reasoning, CROSS-MAP addresses the privacy–utility dilemma inherent in semantics-preserving approaches. An information-theoretic analysis characterizes how and when semantic decoupling provides a better privacy–utility trade-off. Experiments on QA, summarization, and generation tasks validate the analysis and demonstrate that CROSS-MAP outperforms existing baselines in both privacy and utility. These findings highlight the potential of semantic decoupling as a practical direction for privacy-preserving LLM inference.

\section{Limitations}
The utility of CROSS-MAP depends on the extent to which a downstream task can be supported by preserved structure, discourse roles, and recoverable relational information. The summarization and generation results show that the method is not restricted to extractive QA. \rev{However, the utility drop on CommonGen also confirms that highly semantic-dependent tasks remain more challenging}. One promising direction for future work is to augment CROSS-MAP with external relational knowledge sources, such as ConceptNet or related knowledge graphs, to better preserve entity-level relationships during mapping. This may be especially useful for downstream tasks that depend not only on abstract structural patterns but also on relational coherence among entities. \rev{Extending this framework to complex real-world tasks, including contract review, sensitive email drafting, and multi-turn reasoning, represents a critical next step.}

In addition, CROSS-MAP does not guarantee that mapped text will remain consistent with real-world facts or commonsense knowledge. Like other model-based text obfuscation methods, it may generate content that is factually implausible or inconsistent with external knowledge. Such artifacts can make obfuscated text easier to detect and may create opportunities for fact-based inference attacks, in which attackers use real-world constraints to narrow down the possible original meanings. Improving factual consistency under model-based text obfuscation remains an important direction for future work.

A further limitation is the potential dual-use risk of semantic decoupling. By replacing sensitive semantics while preserving structural patterns, the method could also be used to obscure malicious intent from external oversight while retaining enough structure for downstream reasoning. This creates a potential safety risk: harmful requests or plans may become less interpretable to LLM service providers even when their functional structure is preserved. Although CROSS-MAP is intended for privacy protection, its practical deployment should therefore incorporate safeguards against misuse.

% We proposed recoverable topic shifting for inference-time prompt privacy with an \SFT-to-\DPO training pipeline and instance-level inversion evaluation. Our approach provides a practical middle ground between irrecoverable anonymization and expensive cryptographic secure inference, while connecting privacy-preserving usage to semantic coding and adversarial evaluation.

% Use ref.bib for all citations (anthology.bib is not in this project; use \bibliography{custom,ref} if you add custom.bib entries)
\bibliography{ref}

% \clearpage

\appendix

\section{Appendix A: Full Literature Review}
\label{sec:appendix_related_work}

This section provides a comprehensive review of related work.

\subsection{Cryptography}
This line of work studies text encryption for privacy protection~\citep{zhang2025practical,wu2024ditto,limpcformer}. Many systems for agent-to-agent communication also provide support for standard encryption primitives~\citep{yuan2023secure}. While these techniques offer lossless protection that allows complete recovery, they induce nontrivial computational overhead and additional communication latency. Even when optimizations are applied~\citep{chen2024framework,hou2026ciphergpt,yubeaton2024truncformer}, the cost and engineering complexity remain substantial in interactive inference pipelines, particularly in long-context and latency-sensitive settings, which limits the practicality of cryptography as a general-purpose solution for inference-time text protection.

\subsection{Differential Privacy}

Differential Privacy (DP) achieves privacy protection by perturbing data-dependent computations with randomness. As LLMs progress, there are growing concerns about their capacity to memorize sensitive information from their input data. Existing approaches include applying DP during model training, such as using DP variants of stochastic optimization when fine-tuning on private datasets~\citep{abadi2016deep,wuprivately}. Another line of DP research focuses on training private generative models locally under DP constraints to synthesize data, which is then utilized for downstream fine-tuning or inference with larger models~\citep{yue2023synthetic,wang2025rewardds,wuprivacy,utpala2023locally,flemings2024differentially,hongdp}. These DP-based approaches share the core principle of injecting noise into data, gradients, or generation processes to avoid directly exposing the original private dataset to an external model. However, DP noise can significantly degrade model performance by perturbing information essential for reasoning and prediction. To preserve downstream utility, existing methods often either restrict DP protection to less critical components~\citep{chowdhury2025pr,tong2025inferdpt}, or focus on tasks that are relatively insensitive to exact factual and entity-level details, such as sentiment analysis~\citep{mattern2022differentially,kurakin2023harnessing}.

\subsection{Text Anonymization and Sanitization}
Text anonymization aims to transform text to reduce the risk of revealing personal information (e.g., identity or sensitive attributes) while preserving downstream utility~\citep{pilan2022text}. Early neural rewriting approaches combined feature-guided edits, such as dictionary-based synonym substitution, with sequence-to-sequence generation for anonymization~\citep{romanov2019natural,shetty2018a4nt}. Prior work demonstrates that modern LLMs have the potential to both anonymize and deanonymize text, highlighting risks of malicious re-identification attacks~\citep{patsakis2023man,staabbeyond}. Following research further proposes an LLM-based adversarial anonymization framework against LLM-driven re-identification~\citep{staab2025large}. Related attempts include multi-objective optimization for utility-preserving anonymization~\citep{yang2025robust}, misleading adversaries into predicting incorrect private attributes~\citep{frikha2025incognitext}, and span-level truthful anonymization via semantic generalization~\citep{pilan2025truthful}.

Compared to text anonymization, text sanitization extends beyond protecting personal identity information to broader sensitive content~\citep{yue2021differential,li2025papillon,chen2023customized}. To further improve text utility for downstream tasks, some studies introduce an explicit desanitization step. Existing methods include approaches based on predefined plaintext–ciphertext mappings~\citep{kan2023protectinguserprivacyremote}, methods that use a language model to generate span replacements and internalize transformation rules in a local model~\citep{shen2024fire,chen2023hideseekhaslightweight}, and designs that protect only specific sensitive types, e.g. numerical values, to preserve most of the original semantics~\citep{chowdhury2025pr}. This line of works share a consensus that the mapped text and original text must be within the same semantic space for more accurate LLM inference.

\section{Appendix B: Proof of Theorem \ref{thm:ib_sd_advantage}}
\label{sec:proof_ib_sd_advantage}

The formal proof of Theorem~\ref{thm:ib_sd_advantage} begins with the following definitions.

\paragraph{Preservation ratios $\rho_s$ and $\rho_c$.}
These ratios measure the amount of semantic and structural information from the original text that is preserved in the mapped text $y$:
\[
\rho_s=\frac{I(y_s;x_s)}{H(x_s)}, \qquad
\rho_c=\frac{I(y_c;x_c)}{H(x_c)},
\]
where $I(y_s;x_s)$ and $I(y_c;x_c)$ denote the mutual information between the semantic and structural components of the mapped text and those of the original text. $H(x_s)$ and $H(x_c)$ denote the entropy of the semantic and structural components of the original text.

\paragraph{Task-relevant preservation ratios $\eta_s$ and $\eta_c$.}
These ratios quantify how much task-relevant semantic and structural information is transferred from the original text to the mapped text:
\[
\eta_s=\frac{I(y_s;z)}{I(x_s;z)}, \qquad
\eta_c=\frac{I(y_c;z)}{I(x_c;z)},
\]
where $I(y_s;z)$ and $I(y_c;z)$ denote the mutual information between the semantic and structural components of the mapped text and the downstream task output $z$. $I(x_s;z)$ and $I(x_c;z)$ denote the mutual information between the semantic and structural components of the original text and the downstream task. These ratios characterize the fraction of task-relevant information preserved in the mapped text.

\paragraph{Task dependence coefficients $\lambda_s$ and $\lambda_c$.}
These coefficients describe the relative importance of semantic and structural information in the original text for the downstream task:
\[
\lambda_s=\frac{I(x_s;z)}{I(x;z)}, \qquad
\lambda_c=\frac{I(x_c;z)}{I(x;z)},
\]
where $I(x_s;z)$ and $I(x_c;z)$ denote the mutual information between the semantic and structural components of the original text and the downstream task output $z$, and $I(x;z)$ denotes the mutual information between the entire input text and the task output. Since
\[
I(x;z) = I(x_s;z) + I(x_c;z),
\]
it follows that
\[
\lambda_s + \lambda_c = 1.
\]

\paragraph{Compression gains $\Delta\rho_s$ and $\Delta\rho_c$.}
These quantities measure the difference in preserved information between the semantic-preserving and semantic-decoupling mappings:
\[
\Delta\rho_s=\rho_s^{SP}-\rho_s^{SD},\quad
\Delta\rho_c=\rho_c^{SP}-\rho_c^{SD},
\]
where $\rho_s^{SP}$ and $\rho_c^{SP}$ denote the preservation ratios under the semantic-preserving mapping, and $\rho_s^{SD}$ and $\rho_c^{SD}$ denote the preservation ratios under the semantic-decoupling mapping.

\paragraph{Task-relevant preservation differences $\Delta\eta_s$ and $\Delta\eta_c$.}
These quantities measure the difference in task-relevant information retained by the two mappings:
\[
\Delta\eta_s=\eta_s^{SP}-\eta_s^{SD},\quad
\Delta\eta_c=\eta_c^{SP}-\eta_c^{SD},
\]
where $\eta_s^{SP}$ and $\eta_c^{SP}$ denote the task-relevant preservation ratios under the semantic-preserving mapping, while $\eta_s^{SD}$ and $\eta_c^{SD}$ denote those under the semantic-decoupling mapping.

\paragraph{Proof of Theorem \ref{thm:ib_sd_advantage}.}

Consider the information bottleneck objective of a mapping $\mathcal{M}$:
\[
\mathcal{G}(\mathcal{M}) = I(x;y) - \beta I(y;z),
\]
where $x$ denotes the original text, $y$ the mapped text, and $z$ the downstream task variable.

Let the difference between the objectives of the semantic-decoupling mapping and the semantic-preserving mapping be
\begin{equation*}
\Delta = \mathcal{G}(\mathcal{M}_{\mathrm{SD}}) - \mathcal{G}(\mathcal{M}_{\mathrm{SP}}).
\end{equation*}

Expanding the definition yields
\begin{align*}
\Delta
&=
\big(I^{\mathrm{SD}}(x;y) - I^{\mathrm{SP}}(x;y)\big)
\\
&\quad
- \beta
\big(I^{\mathrm{SD}}(y;z) - I^{\mathrm{SP}}(y;z)\big).
\end{align*}

Define the compression gain
\[
\Delta C = I^{\mathrm{SP}}(x;y) - I^{\mathrm{SD}}(x;y),
\]
and the task-information difference
\[
\Delta T = I^{\mathrm{SP}}(y;z) - I^{\mathrm{SD}}(y;z).
\]

Then
\[
\Delta = -\Delta C + \beta \Delta T.
\]

Therefore,
\[
\mathcal{G}(\mathcal{M}_{\mathrm{SD}}) <
\mathcal{G}(\mathcal{M}_{\mathrm{SP}})
\]
holds whenever
\[
\Delta C > \beta \Delta T .
\]

\paragraph{Compression term.}

Assume that the semantic and structural components of the text are independent,
\[
x=(x_s,x_c), \qquad x_s \perp x_c .
\]

Under this assumption,
\[
I(x;y)=I(x_s;y_s)+I(x_c;y_c).
\]

Using the definitions of preservation ratios,
\[
I(x;y)=\rho_s H(x_s)+\rho_c H(x_c).
\]

The compression term is therefore
\begin{align*}
\Delta C
=
(\rho_s^{SP}-\rho_s^{SD})H(x_s)
\\+
(\rho_c^{SP}-\rho_c^{SD})H(x_c).
\end{align*}

Using the definitions
\[
\Delta\rho_s=\rho_s^{SP}-\rho_s^{SD},\,
\Delta\rho_c=\rho_c^{SP}-\rho_c^{SD},
\]
this becomes
\[
\Delta C
=
\Delta\rho_s H(x_s)+\Delta\rho_c H(x_c).
\]

Under the independence assumption, there is 
\[
H(x)=H(x_s)+H(x_c).
\]

Define the semantic information ratio
\[
r=\frac{H(x_s)}{H(x)}.
\]

Then
\[
\frac{H(x_s)}{H(x)}=r, \qquad
\frac{H(x_c)}{H(x)}=1-r.
\]

Thus
\[
\Delta C
=
H(x)\big(r\Delta\rho_s+(1-r)\Delta\rho_c\big).
\]

\paragraph{Task-information term.}

The mutual information between the mapped text and the downstream task can be decomposed as
\[
I(y;z)=I(y_s;z)+I(y_c;z).
\]

From the definitions of the task-relevant preservation ratios,
\[
I(y;z)
=
\eta_s I(x_s;z)
+
\eta_c I(x_c;z).
\]

Using the task dependence coefficients
\[
\lambda_s=\frac{I(x_s;z)}{I(x;z)}, \qquad
\lambda_c=\frac{I(x_c;z)}{I(x;z)},
\]
there is
\[
I(y;z)
=
(\eta_s\lambda_s+\eta_c\lambda_c)I(x;z).
\]

Therefore
\begin{align*}
\Delta T
&= (\eta_s^{SP}\lambda_s+\eta_c^{SP}\lambda_c)I(x;z) \\
&- (\eta_s^{SD}\lambda_s+\eta_c^{SD}\lambda_c)I(x;z)\\
&=(\Delta\eta_s\lambda_s+\Delta\eta_c\lambda_c)I(x;z),
\end{align*}
where
\[
\Delta\eta_s=\eta_s^{SP}-\eta_s^{SD},\qquad
\Delta\eta_c=\eta_c^{SP}-\eta_c^{SD}.
\]

\paragraph{Combining the two terms.}

The condition
\[
\Delta C > \beta \Delta T,
\]
becomes
\begin{align*}
&H(x)\big(r\Delta\rho_s+(1-r)\Delta\rho_c\big)
\\&>
\beta(\Delta\eta_s\lambda_s+\Delta\eta_c\lambda_c)I(x;z).
\end{align*}

Let
\[
\beta'=\beta \cdot \frac{I(x;z)}{H(x)}.
\]

Dividing both sides by $H(x)$ gives
\[
r\Delta\rho_s+(1-r)\Delta\rho_c
>
\beta'(\Delta\eta_s\lambda_s+\Delta\eta_c\lambda_c).
\]

Since
\[
\lambda_c=1-\lambda_s,
\]
the right-hand side becomes
\[
\beta'(\Delta\eta_s\lambda_s+\Delta\eta_c(1-\lambda_s)).
\]

Solving the inequality for $\lambda_s$ yields
\[
\lambda_s <
\frac{
r \Delta\rho_s + (1-r)\Delta\rho_c - \beta' \Delta\eta_c
}{
\beta'(\Delta\eta_s-\Delta\eta_c)
}.
\]

Whenever this condition holds, the inequality
\[
\mathcal{G}(\mathcal{M}_{\mathrm{SD}})
<
\mathcal{G}(\mathcal{M}_{\mathrm{SP}}),
\]
is satisfied. This completes the proof.

% \paragraph{Interpretation.}
% The left-hand side represents the compression gain obtained by
% removing redundant semantic information, while the right-hand side
% measures the loss of task-relevant information. Semantic-decoupling
% mappings are therefore advantageous when the downstream task relies
% more on structural information than on lexical semantics.

\section{Appendix C: Analysis of Attack Success Rate}
\label{sec:appendix_asr}
This section supplements the analysis of the relationship between KL divergence and attack success rate in Section~\ref{sec:theory}.

\begin{figure*}[t]
  \centering
  \includegraphics[width=0.9\textwidth]{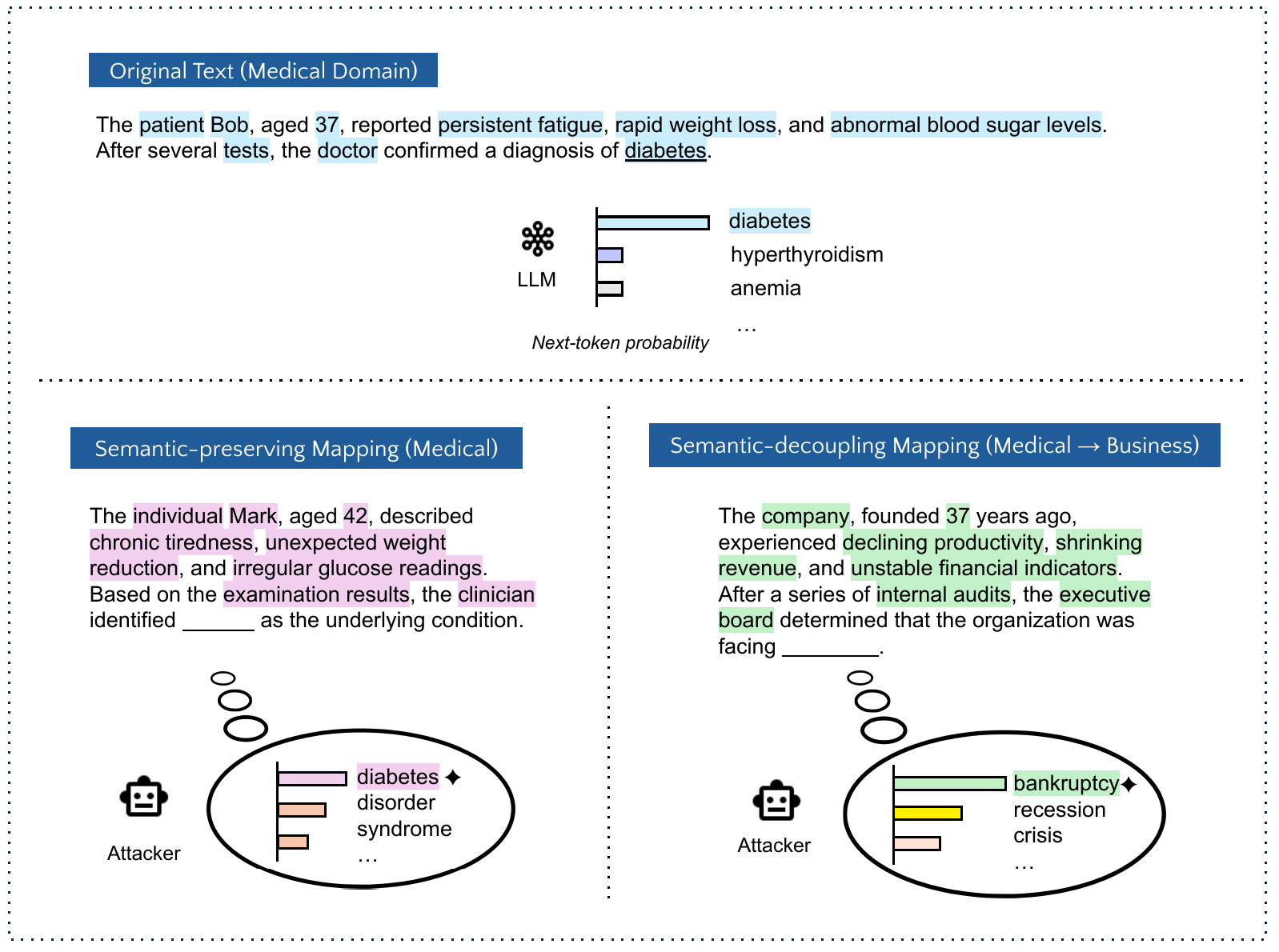}
  \caption{Examples of semantic-preserving and semantic-decoupling mappings under posterior sampling attacks. Semantic decoupling prevents correct inferences by stripping contextual cues from the source domain.}
  \label{fig:attack_examples}
\end{figure*}

\paragraph{Posterior-based span inference attack.}
Let $x$ be the original text, and let $s^\star := (s_1^\star,\ldots,s_m^\star)$ denote the $m$ sensitive spans in $x$ protected by the mapping mechanism.
Given the attacker-observed output $y=\mathcal{M}(x)$, the attacker aims to recover each hidden span $s_i^\star$ from $y$.

Let $q_\theta(s_i| y)$ denote the attacker model, e.g., an LLM posterior over the $i$-th span conditioned on $y$. For each span $i\in\{1,\ldots,m\}$, the attacker outputs $K$ guesses
\[
\hat{s}_i^{(1)},\ldots,\hat{s}_i^{(K)} \sim q_\theta(s_i| y).
\]

Equivalently, the attacker forms a candidate set
\[
\widehat{\mathcal{C}}_i^{K}(y)
:=
\{\hat{s}_i^{(1)},\ldots,\hat{s}_i^{(K)}\}.
\]

A span is said to be successfully recovered if the ground-truth span appears in the candidate set, i.e.,
\[
s_i^\star \in \widehat{\mathcal{C}}_i^{K}(y).
\]

\paragraph{Attack success rate at top-$K$.}
We define the per-example attack success rate at top-$K$ as the fraction of sensitive spans successfully recovered:
\[
\mathrm{ASR@}K(x,y)
:=
\frac{1}{m}\sum_{i=1}^m
\mathbf{1}\!\left\{s_i^\star \in \widehat{\mathcal{C}}_i^{K}(y)\right\}.
\]

Taking expectation over the data distribution and mapping randomness gives

\vspace{-6pt}
{\small
\[
\mathrm{ASR@}K(\mathcal{M})
:=
\mathbb{E}_{x\sim p(x),\, y\sim \mathcal{M}(x)}
\left[\mathrm{ASR@}K(x,y)\right].
\]
}

Thus, $\mathrm{ASR@}K(\mathcal{M})$ measures the expected proportion of sensitive spans that can be recovered by the attacker using $K$ guesses per span.

\vspace{6pt}
\begin{proposition}[$\mathrm{ASR@}K$ under posterior sampling]
  \label{prop:ask_posterior}
  Assume that for each sensitive span $i$, the attacker draws $K$ i.i.d. guesses from the posterior $q_\theta(s_i| y)$. Then
  
  \vspace{-8pt}
  {\small
  \[
  \mathrm{ASR@}K(\mathcal{M})
  =
  \mathbb{E}_{x,y}\left[
  \frac{1}{m}\sum_{i=1}^m
  \Bigl(1-\bigl(1-q_\theta(s_i^\star\mid y)\bigr)^K\Bigr)
  \right].
  \]
  }
  \end{proposition}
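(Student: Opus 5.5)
The plan is to compute the expectation of each indicator conditionally on $(x,y)$ and then average over $(x,y)$ using linearity of expectation and the tower property. The randomness splits into two layers. The outer layer is the draw of $x\sim p(x)$ and $y\sim\mathcal{M}(x)$; this fixes the ground-truth spans $s_i^\star$ and the number of spans $m$. The inner layer is the attacker's sampling of candidates $\hat{s}_i^{(1)},\ldots,\hat{s}_i^{(K)}$ from $q_\theta(\cdot\mid y)$. The quantity $\mathrm{ASR@}K(x,y)$ is random through this inner sampling, so I read the definition of $\mathrm{ASR@}K(\mathcal{M})$ as an expectation over all of this randomness. Conditioning on $(x,y)$, I write $\mathrm{ASR@}K(\mathcal{M})=\mathbb{E}_{x,y}\bigl[\mathbb{E}_{\hat{s}}[\mathrm{ASR@}K(x,y)\mid x,y]\bigr]$.

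For fixed $(x,y)$, linearity moves the inner expectation through the average $\frac{1}{m}\sum_{i=1}^m$. Each term becomes $\Pr\bigl(s_i^\star\in\widehat{\mathcal{C}}_i^K(y)\mid x,y\bigr)$. I compute this probability through the complementary event. The span is missed exactly when every one of the $K$ draws differs from $s_i^\star$. By the i.i.d. assumption, this probability factorizes as $\prod_{k=1}^K\Pr(\hat{s}_i^{(k)}\neq s_i^\star)=\bigl(1-q_\theta(s_i^\star\mid y)\bigr)^K$. Hence the hit probability is $1-\bigl(1-q_\theta(s_i^\star\mid y)\bigr)^K$. Substituting back and taking the outer expectation gives the stated formula.

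The only delicate point is being explicit about what is conditioned on. The i.i.d. draws must be independent conditionally on $y$, and $q_\theta$ depends only on $y$, not on $x$. The ground truth $s_i^\star$ is a deterministic function of $x$, so it is fixed under the conditioning; this is why the ground-truth probability $q_\theta(s_i^\star\mid y)$ appears inside the outer expectation. I also note that spans are treated as discrete objects, so $q_\theta(s_i^\star\mid y)$ is a point mass and "appears in the candidate set" means exact equality. With that made explicit, the remaining steps are routine.
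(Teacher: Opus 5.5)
Your proposal is correct and follows the same route as the paper's proof. Both condition on $(x,y)$, use the i.i.d. assumption to get the per-span miss probability $\bigl(1-q_\theta(s_i^\star\mid y)\bigr)^K$, average over the $m$ spans, and then take the outer expectation over $(x,y)$. Your explicit point that $\mathrm{ASR@}K(\mathcal{M})$ must also average over the attacker's sampling is a worthwhile clarification, since the paper leaves it implicit in its $\mathbb{E}[\mathrm{ASR@}K(x,y)\mid x,y]$ step.
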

  
  \begin{proof}
  For a fixed $(x,y)$ and a fixed span $i$, let
  \[
  p_i(y):=q_\theta(s_i^\star\mid y).
  \]

  Since the attacker draws $K$ i.i.d. guesses from $q_\theta(s_i| y)$, the probability that none of the $K$ guesses equals the true span $s_i^\star$ is
  \[
  (1-p_i(y))^K.
  \]

  Therefore, the probability that the true span is recovered at least once is
  \[
  1-(1-p_i(y))^K
  =
  1-\bigl(1-q_\theta(s_i^\star\mid y)\bigr)^K.
  \]

  Taking the average over the $m$ spans gives

  \vspace{-6pt}
  {\small
  \[
  \mathbb{E}\!\left[\mathrm{ASR@}K(x,y)\mid x,y\right]
  =
  \frac{1}{m}\sum_{i=1}^m
  \Bigl(1-\bigl(1-q_\theta(s_i^\star\mid y)\bigr)^K\Bigr).
  \]
  }

  Finally, taking expectation over $(x,y)$ yields the result.
  \end{proof}

  \paragraph{Connection to KL divergence.}
  For each sensitive span $s_i^\star$, define its surprisal~\cite{kakouros2023investigating} under the attacker posterior as
  \[
  \mathcal{S}_i(y)
  :=
  -\log_2 q_\theta(s_i^\star\mid y).
  \]

  Taking expectation over the data distribution
\begin{align*}
  \mathbb{E}\bigl[\mathcal{S}_i(\mathcal{Y})\bigr]
  &=
  H(\mathcal{S}_i(\mathcal{Y})\mid \mathcal{Y}) \\
  &=
  H(\mathcal{S}_i(\mathcal{Y}))-I(\mathcal{S}_i(\mathcal{Y});\mathcal{Y}),
  \end{align*}
  where
  \[
  I(\mathcal{S}_i(\mathcal{Y});\mathcal{Y})
  =
  \mathbb{E}_{y}\!\left[\mathrm{KL}\!\bigl(p(s_i\mid y)\,\|\,p(s_i)\bigr)\right].
  \]

  Therefore, smaller KL divergence implies larger expected surprisal. Furthermore, since
  \[
  q_\theta(s_i^\star\mid y)=2^{-\mathcal{S}_i(y)},
  \]
  then Proposition~\ref{prop:ask_posterior} can be rewritten as

  \vspace{-8pt}
  {\small
  \[
  \mathrm{ASR@}K(\mathcal{M})
  =
  \mathbb{E}_{x,y}\left[
  \frac{1}{m}\sum_{i=1}^m
  \Bigl(1-\bigl(1-2^{-\mathcal{S}_i(y)}\bigr)^K\Bigr)
  \right].
  \]
  }

Therefore, $\mathrm{ASR@}K$ is a monotonically decreasing function of the span surprisal. Lower surprisal assigns larger posterior mass to the ground-truth span, which increases the probability that the true span appears among the attacker's $K$ guesses. Consequently, a larger posterior-to-prior KL divergence leads to smaller expected surprisal and hence higher $\mathrm{ASR@}K$.

This aligns with the qualitative conclusion of this section: semantic-preserving mappings tend to be more vulnerable because they preserve cues in $y$ that concentrate the posterior around $s^\star$. Figure~\ref{fig:attack_examples} provides an example of such pattern.

\section{Appendix D: Algorithm}
\label{sec:appendix_algorithm}
The details of the inference workflow with CROSS-MAP described in Section~\ref{subsec:inference} are summarized in Algorithm~\ref{alg:inference}.

\begin{algorithm}[t]
  \caption{Cross-domain Mapping and Recovery}
  \label{alg:inference}
  \begin{algorithmic}[1]
  \REQUIRE Trained mapper $\mathcal{M}_\phi$; trained recoverer $\mathcal{R}_\psi$; downstream $\mathcal{LLM}$.
  \FOR{each input $x$}
    \STATE $(\tau, D, y)\sim \mathcal{M}_\phi(x)$.
    \STATE $y_{\mathrm{a}} \leftarrow \mathcal{LLM}(y)$.
    \STATE $(\tilde{x}, \tilde{x}_{\mathrm{a}}) \sim \mathcal{R}_\psi\!\big(\tau, D, \mathrm{concat}(y, y_{\mathrm{a}})\big)$.
    \STATE \textbf{return} $\tilde{x}_{\mathrm{a}}$.
  \ENDFOR
  \end{algorithmic}
  \end{algorithm}

\section{Appendix E: Security Analysis of CROSS-MAP}
  \label{sec:appendix_attack}
  
  This section analyzes the security of CROSS-MAP against posterior-based attackers under dictionary leakage, when the attacker additionally observes or infers a subset of the mapper dictionary.
  
  As discussed in Appendix~\ref{sec:appendix_asr}, the security of a mapping mechanism can be characterized through the posterior mass assigned to the ground-truth sensitive spans. In particular, under posterior-based attacks with $K$ guesses per span, the attack success rate $\mathrm{ASR@}K$ is determined by the posterior probabilities of the true spans, equivalently by their surprisal. We now analyze how these quantities change when part of the dictionary is leaked to the attacker.
  
  \paragraph{Leakage model.}
  Let $  s^\star=(s_1^\star,\ldots,s_m^\star)$ denote the $m$ sensitive spans in the original text $x$, and let $\hat{s}=(\hat{s}_1,\ldots,\hat{s}_m)$ be their mapped-domain counterparts produced by CROSS-MAP. The full mapper dictionary for this example is
  \[
  D=\{(s_i^\star,\hat{s}_i)\}_{i=1}^{m}.
  \]

  Assume the attacker additionally learns a leaked subset $D_{\mathrm{leak}}\subseteq D$ containing $\ell$ revealed pairs, with leakage rate $\rho=\ell/m$. Let $\mathcal{I}_{\mathrm{leak}}\subseteq\{1,\ldots,m\}$ denote the indices of leaked spans, and let $\mathcal{I}_{\mathrm{hid}}:=\{1,\ldots,m\}\setminus \mathcal{I}_{\mathrm{leak}}$ be the indices of the still-hidden spans. Given the attacker-observed mapped output $y=\mathcal{M}(x)$, dictionary leakage enriches the attacker-visible information from $y$ to $(y,D_{\mathrm{leak}})$.

  Accordingly, for each hidden span $i\in\mathcal{I}_{\mathrm{hid}}$, the attacker posterior changes from $q_\theta(s_i\mid y)$ to the leakage-conditioned posterior $q_\theta^\rho(s_i\mid y,D_{\mathrm{leak}})$. For leaked spans $i\in\mathcal{I}_{\mathrm{leak}}$, recovery is trivial once the corresponding dictionary pair is known.
  
  \paragraph{Attack success rate under leakage.}
  Under the same attack protocol as in Appendix~\ref{sec:appendix_asr}, the attacker outputs $K$ guesses for each span. For leaked spans, the correct value is directly recovered from $D_{\mathrm{leak}}$. For hidden spans, the attacker draws $K$ guesses from the leakage-conditioned posterior $q_\theta^\rho(s_i | y,D_{\mathrm{leak}})$. Define the per-example success rate under leakage as

  \vspace{-12pt}
  {\scriptsize
  \[
  \mathrm{ASR@}K_\rho(x,y)
  :=
  \frac{1}{m}
  \left(
  \sum_{i\in\mathcal{I}_{\mathrm{leak}}} 1
  +
  \sum_{i\in\mathcal{I}_{\mathrm{hid}}}
  \mathbf{1}\!\left\{s_i^\star\in \widehat{\mathcal S}_{i,\rho}^K(y)\right\}
  \right),
  \]}
where $\widehat{\mathcal S}_{i,\rho}^K(y)$ denotes the attacker's $K$ guesses for span $i$ under leakage. Taking expectation over the data distribution and mapping randomness gives
  \[
  \mathrm{ASR@}K_\rho(\mathcal M)
  :=
  \mathbb E_{x,y}\!\left[\mathrm{ASR@}K_\rho(x,y)\right].
  \]
  
  \begin{proposition}[$\mathrm{ASR@}K_\rho$ under dictionary leakage]
  \label{prop:asrk_leakage}
  Assume that for each hidden span $i\in\mathcal{I}_{\mathrm{hid}}$, the attacker draws $K$ i.i.d. guesses from
  $q_\theta^\rho(s_i| y,D_{\mathrm{leak}})$.
  Then

  \vspace{-8pt}
  {\small
  \begin{align*}
  &\mathrm{ASR@}K_\rho(\mathcal M)
  =
  \mathbb E_{x,y}\!\left[
  \frac{1}{m}
  \left(
  |\mathcal{I}_{\mathrm{leak}}|
  \right.
  \right.
  \\
  &\quad
  \left.
  \left.
  +
  \sum_{i\in\mathcal{I}_{\mathrm{hid}}}
  \Bigl(1-\bigl(1-q_\theta^\rho(s_i^\star\mid y,D_{\mathrm{leak}})\bigr)^K\Bigr)
  \right)
  \right].
  \end{align*}
  }
  \end{proposition}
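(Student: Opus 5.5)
The plan mirrors the proof of Proposition~\ref{prop:ask_posterior}. We first condition on a fixed realization of $(x,y)$ together with the leaked subset $D_{\mathrm{leak}}$, which fixes the index sets $\mathcal{I}_{\mathrm{leak}}$ and $\mathcal{I}_{\mathrm{hid}}$. The only randomness left is then the attacker's sampling of guesses. By linearity of expectation, $\mathbb{E}[\mathrm{ASR@}K_\rho(x,y)\mid x,y,D_{\mathrm{leak}}]$ splits into the two sums in the definition of $\mathrm{ASR@}K_\rho(x,y)$, and we treat them separately.

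\textbf{Leaked indices.} For $i\in\mathcal{I}_{\mathrm{leak}}$ each indicator is deterministically $1$, because the leaked pair $(s_i^\star,\hat s_i)$ reveals the true span. That sum is therefore exactly $|\mathcal{I}_{\mathrm{leak}}|$.

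\textbf{Hidden indices.} For $i\in\mathcal{I}_{\mathrm{hid}}$ we repeat the argument of Proposition~\ref{prop:ask_posterior} with $q_\theta$ replaced by the leakage-conditioned posterior. Set $p_i^\rho:=q_\theta^\rho(s_i^\star\mid y,D_{\mathrm{leak}})$. By the i.i.d.\ assumption, the probability that none of the $K$ guesses equals $s_i^\star$ is $(1-p_i^\rho)^K$. Hence
\[
\mathbb{E}\bigl[\mathbf{1}\{s_i^\star\in\widehat{\mathcal S}_{i,\rho}^K(y)\}\bigm| x,y,D_{\mathrm{leak}}\bigr]=1-(1-p_i^\rho)^K.
\]

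\textbf{Combining.} Adding the two parts and dividing by $m$ gives the conditional expression inside the brackets. The tower property over the data distribution, the mapping randomness and the leakage randomness then yields the stated formula. Since $D_{\mathrm{leak}}$ is a function of the (possibly random) leakage event, the outer expectation written as $\mathbb{E}_{x,y}$ should be read as also averaging over $D_{\mathrm{leak}}$. With that reading the formula holds whether $\mathcal{I}_{\mathrm{leak}}$ is fixed or random.

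The only point requiring care is this bookkeeping. Because $|\mathcal{I}_{\mathrm{leak}}|$ stays inside the expectation, no independence between the leakage set and $(x,y)$ is needed. The rest is routine, so there is no real obstacle. As a sanity check, setting $\rho=0$ should recover Proposition~\ref{prop:ask_posterior}.
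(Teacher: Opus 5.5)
Your proposal is correct and follows the paper's proof step for step: leaked spans contribute exactly $1$ each, each hidden span contributes $1-(1-q_\theta^\rho(s_i^\star\mid y,D_{\mathrm{leak}}))^K$ by the i.i.d.\ sampling assumption, and averaging over the $m$ spans and then taking the outer expectation gives the formula. Your extra remark that the outer expectation should also be read as averaging over the leakage randomness is a harmless clarification of bookkeeping that the paper leaves implicit.
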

  
  \begin{proof}
  For each leaked span $i\in\mathcal{I}_{\mathrm{leak}}$, the true span is directly revealed by the leaked dictionary pair, so its contribution to the success rate is $1$.
  
  For each hidden span $i\in\mathcal{I}_{\mathrm{hid}}$, let
  \[
  p_i^\rho(y):=q_\theta^\rho(s_i^\star\mid y,D_{\mathrm{leak}}).
  \]

  Since the attacker draws $K$ i.i.d. guesses from the leakage-conditioned posterior, the probability that none of the $K$ guesses equals the ground-truth span is
  \[
  (1-p_i^\rho(y))^K.
  \]

  Therefore, the probability that the true span is recovered at least once is
  \[
  1-(1-p_i^\rho(y))^K
  =
  1-\bigl(1-q_\theta^\rho(s_i^\star\mid y,D_{\mathrm{leak}})\bigr)^K.
  \]

  Averaging over all $m$ spans yields the conditional per-example success rate, and taking expectation over $(x,y)$ gives the result.
  \end{proof}
  
  \paragraph{Connection to surprisal under leakage.}
  For each hidden span $i\in\mathcal{I}_{\mathrm{hid}}$, define its leakage-conditioned surprisal as
  \[
  \mathcal S_i^\rho(y)
  :=
  -\log_2 q_\theta^\rho(s_i^\star\mid y,D_{\mathrm{leak}}).
  \]

  Then Proposition~\ref{prop:asrk_leakage} can be rewritten as
  {\small
  \begin{align*}
  \mathrm{ASR@}K_\rho(\mathcal M)
  &=
  \mathbb E_{x,y}\!\left[
  \frac{1}{m}
  \left(
  |\mathcal{I}_{\mathrm{leak}}|
  +\right.
  \right.
  \\
  &\quad
  \left.
  \left.
  \sum_{i\in\mathcal{I}_{\mathrm{hid}}}
  \Bigl(1-\bigl(1-2^{-\mathcal S_i^\rho(y)}\bigr)^K\Bigr)
  \right)
  \right].
  \end{align*}
  }

  Hence, for the unrevealed spans, dictionary leakage increases attack success exactly when it decreases the surprisal of the ground-truth spans under the attacker's posterior.
    
  \paragraph{Effect 1: Search-space reduction for hidden spans.}
  Dictionary leakage reduces the effective search space for the remaining hidden spans. Intuitively, once some span--mapping pairs are known, candidates inconsistent with the leaked dictionary can be eliminated, which increases the relative posterior mass assigned to the feasible values of the unrevealed spans.
  
  To make this precise, suppose that for each hidden span $i$, the attacker originally considers a feasible candidate set $\mathcal C_i(y)$, while under leakage the feasible set shrinks to
  \[
  \mathcal C_i^\rho(y,D_{\mathrm{leak}})\subseteq \mathcal C_i(y).
  \]

  If the posterior over the feasible set is approximately uniform, then
  \[
  q_\theta(s_i^\star\mid y)\approx \frac{1}{|\mathcal C_i(y)|},
  \]
  \[
    q_\theta^\rho(s_i^\star\mid y,D_{\mathrm{leak}})
    \approx \frac{1}{|\mathcal C_i^\rho(y,D_{\mathrm{leak}})|}.
  \]
  
  Since
  \[
  |\mathcal C_i^\rho(y,D_{\mathrm{leak}})|
  \le
  |\mathcal C_i(y)|,
  \]
  the posterior mass of the true span increases after leakage, which lowers surprisal and raises $\mathrm{ASR@}K$.
  
  \paragraph{Effect 2: Posterior concentration via cross-span dependence.}
  Beyond search-space reduction, leaked spans can provide additional semantic or structural clues about the remaining hidden spans. Let $s_{\mathrm{hid}}$ and $s_{\mathrm{leak}}$ denote the collections of hidden and leaked spans, respectively. Conditioning on the leaked dictionary cannot increase uncertainty:
  \[
  H(s_{\mathrm{hid}}\mid y,D_{\mathrm{leak}})
  \le
  H(s_{\mathrm{hid}}\mid y).
  \]

  Equivalently,
  \begin{align*}
  &I(s_{\mathrm{hid}};D_{\mathrm{leak}}\mid y) \\
  &=
  H(s_{\mathrm{hid}}\mid y)-H(s_{\mathrm{hid}}\mid y,D_{\mathrm{leak}})\ge 0.
  \end{align*}

  Thus, leaked dictionary entries provide additional information about the unrevealed spans beyond what is already contained in $y$. At the per-span level, for each hidden span $s_i$ this implies
  \[
  H(s_i\mid y,D_{\mathrm{leak}})
  \le
  H(s_i\mid y).
  \]

  If the attacker posterior is well calibrated to the true posterior, then the expected leakage-conditioned surprisal satisfies
  \begin{align*}
  \mathbb E[\mathcal S_i^\rho(\mathcal{Y})]
  =
  H(s_i\mid \mathcal{Y},D_{\mathrm{leak}}) \\
  \le
  H(s_i\mid \mathcal{Y})
  =
  \mathbb E[\mathcal S_i(\mathcal{Y})].
  \end{align*}

  Therefore, dictionary leakage reduces the expected surprisal of the unrevealed spans and increases their recovery probability under posterior-based attacks.
  
  % \begin{proposition}[Monotonicity of attack success under nested leakage]
  % \label{prop:monotone_leakage}
  % Assume a nested leakage mechanism: for any $\rho_1\le \rho_2$, the leaked dictionary at rate $\rho_2$ contains all pairs leaked at rate $\rho_1$. Assume further that for every hidden span $i$, the corresponding leakage-conditioned posterior mass on the true span is non-decreasing with leakage:
  % \[
  % q_\theta^{\rho_2}(s_i^\star\mid y,D_{\mathrm{leak}}^{\rho_2})
  % \ge
  % q_\theta^{\rho_1}(s_i^\star\mid y,D_{\mathrm{leak}}^{\rho_1}).
  % \]
  % Then
  % \[
  % \mathrm{ASR@}K_{\rho_2}(\mathcal M)
  % \ge
  % \mathrm{ASR@}K_{\rho_1}(\mathcal M).
  % \]
  % \end{proposition}
  
  % \begin{proof}
  % For leaked spans, increasing $\rho$ can only convert previously hidden spans into directly recovered spans, which increases their contribution to the average success rate.
  
  % For the remaining hidden spans, Proposition~\ref{prop:asrk_leakage} shows that the success term is
  % \[
  % 1-\bigl(1-q\bigr)^K,
  % \]
  % where $q$ is the posterior mass assigned to the true span. This function is monotone increasing in $q$ on $[0,1]$. Hence, if the posterior mass on the true span is non-decreasing under stronger leakage, then each hidden-span contribution is also non-decreasing. Averaging over spans and taking expectation proves the result.
  % \end{proof}
  
  The above analysis shows that dictionary leakage harms security through two coupled mechanisms: it directly reveals the leaked spans themselves, and it indirectly increases the recoverability of the remaining spans by concentrating the attacker posterior. In terms of the CROSS-MAP framework developed in this paper, both effects increase the posterior mass on the ground-truth spans, reduce their surprisal, and therefore raise $\mathrm{ASR@}K$. This motivates evaluating security as a function of the leakage rate $\rho$ and reporting robustness against strong attackers with partial dictionary knowledge inferred from historical observations.

\section{Appendix F: Supplementary Experimental Results}
This section provides supplementary details on the experimental setup and results.

\subsection{Experimental Setup}
\label{sec:appendix_exp_setup}

This subsection reports the experimental setup, including datasets, baselines, models, and evaluation metrics.

\paragraph{Datasets.} \rev{SQuAD~\cite{rajpurkar2016squad}, BioASQ~\cite{tsatsaronis2015overview}, FinCausal 2025~\cite{moreno2025financial}, and NarrativeQA~\cite{kovcisky2018narrativeqa} are used for QA-style evaluation across general, biomedical, financial, and narrative domains. CNN/DailyMail is used for summarization, and CommonGen is used for constrained open-ended generation.} SynthPAI~\citep{yukhymenko2024synthetic} is used for evaluation against de-anonymization attacks. Model optimization involves a subset of SQuAD comprising 754 instances for SFT and 1,502 instances for DPO. For QA evaluation, 500 test instances are sampled from each dataset. The additional summarization and open-ended generation experiments use 200 CNN/DailyMail instances and 100 CommonGen instances, respectively. All reported results are averaged across the corresponding test subset in a single run for each dataset.
% For each dataset, a controlled subset of $N=\todo{50}$ instances is selected following \todo{sampling rule, filters, and random seed}.

\paragraph{Baselines.} State-of-the-art methods for inference-time privacy protection are used as baselines, including the sanitization method InferDPT~\cite{tong2025inferdpt}, and bidirectional frameworks HAS~\cite{chen2023hideseekhaslightweight} and Pr$\varepsilon$$\varepsilon$mpt~\cite{chowdhury2025pr}. For evaluation against de-anonymization attacks, the adversarial anonymization framework GPT-AA~\cite{staab2025large} is also included as a baseline.

% \paragraph{LLM/API usage and prompts.}
% A single GPT model (\todo{model name/version}) is used to generate answers for each method on four datasets
% ($4\times 50$ samples), and to run the de-anonymization evaluation on PersonalReddit (up to $4\times 50$ samples).
% For black-box span restoration, a Gemini model (\todo{model name/version}) is used as the attacker.
% Decoding settings are fixed across datasets: temperature=\todo{--}, top-$p$=\todo{--}, max tokens=\todo{--}.
% Prompt templates are included in Appendix~\todo{X} to ensure reproducibility.

\paragraph{Metrics.}
\rev{The evaluation focuses on the trade-off between utility and privacy across tasks. For QA, utility is assessed with EM/F1/ACC. For summarization, utility is assessed with ROUGE-L and BERTScore. For CommonGen, utility is assessed with ROUGE-L and concept coverage. The utility-oriented training objectives introduced in Section~\ref{sec:method}, including fluency and recovery quality, are also reported where applicable. Privacy is measured using dictionary-/text-level distance, dictionary coverage, sensitive-span retention, and the posterior-mass quantity derived from surprisal in Appendix~\ref{sec:appendix_attack}. Beyond these intrinsic metrics, attack success rate (ASR) is reported under different attackers to reflect effective privacy under each threat model.}

\paragraph{Models.} Qwen-2.5-3B/7B/14B~\cite{qwen2025qwen25technicalreport} and Qwen3-4B~\cite{yang2025qwen3} are used as the base models for the mapper and the recoverer. LLAMA-8B is used as the downstream model for the main experiments, \rev{and the additional deployment ablation also evaluates stronger external LLM access through CROSS-MAP.} LoRA is used for parameter-efficient fine-tuning with $r=16$, $\alpha=32$, and a learning rate of $1\times 10^{-5}$ for all runs. 

\paragraph{Prompts.} 
Figure \ref{fig:mapper_prompt} and \ref{fig:recoverer_prompt} show the mapper and recoverer prompts, respectively.

\subsection{Supplementary Experimental Results}
This section provides a full description of experimental results, including tables and figures, as a supplement to Section~\ref{sec:experiments}.

\label{sec:appendix_exp_results}
\subsubsection{Utility and Privacy}

\begin{table*}[t]
  \centering
  \scriptsize
  \setlength{\tabcolsep}{2.6pt}
  \begin{tabular}{llccc|cccc|ccc}
    \toprule
    & & \multicolumn{3}{c|}{\textbf{Downstream QA utility}}
    & \multicolumn{4}{c|}{\textbf{Recovered text utility}}
    & \multicolumn{3}{c}{\textbf{Mapped text privacy}} \\
    \cmidrule(lr){3-5}\cmidrule(lr){6-9}\cmidrule(lr){10-12}
    Dataset & Method
    & {\rev{EM$\uparrow$}} & {\rev{F1$\uparrow$}} & {\rev{ACC$\uparrow$}}
    & {\rev{Flu. Loss$\downarrow$}} & {\rev{Sim.$\uparrow$}} & {\rev{BLEU$\uparrow$}} & {\rev{R-1$\uparrow$}}
    & {\rev{$c_\text{dict}\uparrow$}} & {\rev{$d_{\text{dict}}\uparrow$}} & {\rev{$d_{\text{text}}\uparrow$}}   \\
    \midrule
    \multirow{5}{*}{SQuAD}
      & Plain QA        & 76.42 & 84.10 & 94.73 & 4.202 & 1.000 & 1.000 & 1.000 & 0.000 & 0.000 & 0.000  \\
      & Pr$\varepsilon$$\varepsilon$mpt  & 75.78 & 84.02 & 94.70 & 4.365 & 1.000 & 1.000 & 1.000 & 0.006 & 0.142 & 0.006   \\\noalign{\vskip 0.3em}
      \cdashline{2-12} \noalign{\vskip 0.5em}
      & InferDPT   & 68.21 & 74.04 & 81.34 & 5.597 & 0.737 & 0.520 & 0.710 & 0.172 & 0.289 & 0.263  \\
      & HAS      & 72.97 & 82.10 & 91.84 & 4.212 & 0.973 & 0.939 & 0.975 & 0.181 & 0.408 & 0.238  \\
      & \textbf{CROSS-MAP}
                        & \textbf{74.94} & \textbf{83.22} & \textbf{92.98}
                        & 4.703 & 0.826 & 0.586 & 0.800
                        & \textbf{0.413} & \textbf{0.753} & \textbf{0.752}  \\
    \midrule
    \multirow{5}{*}{BioASQ}
      & Plain QA        & 13.48 & 32.02 & 96.84 & 4.313 & 1.000 & 1.000 & 1.000 & 0.000 & 0.000 & 0.000  \\
      & Pr$\varepsilon$$\varepsilon$mpt  & 13.44 & 31.99 & 96.81 & 4.156 & 1.000 & 1.000 & 1.000 & 0.041 & 0.497 & 0.151  \\  \noalign{\vskip 0.3em}
      \cdashline{2-12} \noalign{\vskip 0.5em}
      & InferDPT   & 8.97  & 19.88 & 77.44 & 4.117 & 0.709 & 0.540 & 0.730 & 0.094 & 0.375 & 0.291  \\
      & HAS      & 11.91 & 28.45 & 95.21 & 4.346 & 0.998 & 0.958 & 0.977 & 0.103 & 0.553 & 0.203  \\
      & \textbf{CROSS-MAP}
                        & \textbf{12.73} & \textbf{30.90} & \textbf{96.10}
                        & 5.128 & 0.879 & 0.610 & 0.805
                        & \textbf{0.604} & \textbf{0.858} & \textbf{0.919}  \\
    \midrule
    \multirow{5}{*}{FinCausal}
      & Plain QA        & 59.27 & 77.61 & 81.20 & 4.712 & 1.000 & 1.000 & 1.000 & 0.000 & 0.000 & 0.000  \\
      & Pr$\varepsilon$$\varepsilon$mpt  & 59.20 & 77.54 & 81.15 & 4.722 & 1.000 & 1.000 & 1.000 & 0.016 & 0.150 & 0.037  \\ \noalign{\vskip 0.3em}
      \cdashline{2-12} \noalign{\vskip 0.5em}
      & InferDPT   & 58.63 & 77.10 & 79.06 & 4.824 & 0.964 & 0.893 & 0.949 & 0.054 & 0.238 & 0.036  \\
      & HAS      & 57.40 & 75.89 & 79.24 & 4.645 & 0.968 & 0.921 & 0.956 & 0.092 & 0.294 & 0.139  \\
      & \textbf{CROSS-MAP}
                        & \textbf{58.99} & \textbf{77.33} & \textbf{80.61}
                        & 4.970 & 0.860 & 0.733 & 0.877
                        & \textbf{0.540} & \textbf{0.710} & \textbf{0.678}  \\
    \midrule
    \multirow{5}{*}{NarrativeQA}
      & Plain QA        & 21.12 & 44.21 & 90.68 & 4.518 & 1.000 & 1.000 & 1.000 & 0.000 & 0.000 & 0.000  \\
      & Pr$\varepsilon$$\varepsilon$mpt  & 21.07 & 44.15 & 90.65 & 4.681 & 1.000 & 1.000 & 1.000 & 0.011 & 0.516 & 0.117  \\  \noalign{\vskip 0.3em}
      \cdashline{2-12} \noalign{\vskip 0.5em}
      & InferDPT   & 15.20 & 38.74 & 72.12 & 4.545 & 0.846 & 0.910 & 0.955 & 0.014 & 0.251 & 0.154  \\
      & HAS      & 19.61 & 41.92 & 87.80 & 4.512 & 0.943 & 0.897 & 0.954 & 0.094 & 0.541 & 0.341  \\
      & \textbf{CROSS-MAP}
                        & \textbf{20.49} & \textbf{43.22} & \textbf{89.67}
                        & 5.138 & 0.829 & 0.689 & 0.863
                        & \textbf{0.313} & \textbf{0.796} & \textbf{0.880}  \\
    \bottomrule
  \end{tabular}
  \caption{Privacy and utility results across datasets for Plain QA, Pr$\varepsilon$$\varepsilon$mpt~\citep{chowdhury2025pr}, InferDPT~\citep{tong2025inferdpt}, HAS~\citep{chen2023hideseekhaslightweight}, and CROSS-MAP. The results are reported using Qwen2.5-14B as the base model for both the mapper and the recoverer in CROSS-MAP. An LLAMA-8B model is used for downstream QA.}
  \label{tab:main_results_all}
  \vspace{-10pt}
\end{table*}

In Table~\ref{tab:main_results_all}, the ACC metric calculates the proportion of recovered LLM responses that contain the ground-truth answer span. The similarity (Sim.) metric for recovered text is defined as $1-d_{\text{text}}(x, \tilde{x})$. The BLEU and ROUGE-1~(R-1) metrics are also computed by comparing $x$ with $\tilde{x}$. \rev{Fluency Loss (Flu. Loss) is the token-level negative log-likelihood under the reference LM, i.e., the negative of the fluency score $f_{\mathrm{text}}$ in Section~\ref{subsec:key_components}.} Other metrics are consistent with the definitions in Section~\ref{subsec:key_components}. 

Among all methods, Plain QA and Pr$\varepsilon$$\varepsilon$mpt retain the highest downstream utility but provide negligible or weak privacy. For Pr$\varepsilon$$\varepsilon$mpt, the mapped-text privacy scores remain low overall, reflecting its minimal perturbation on the original text. InferDPT delivers only partial privacy gains while causing a substantial utility collapse (e.g., on SQuAD, F1 drops from 84.10 to 74.04), highlighting that perturbation without an explicit recovery mechanism distorts answer-critical semantics. As a bidirectional framework, HAS attains higher utility than InferDPT, but its privacy remains below CROSS-MAP, suggesting that residual semantic cues are still preserved in the mapped text and thus remain exposable to external LLMs. Across all datasets, CROSS-MAP consistently achieves the strongest mapped-text privacy while largely preserving downstream QA utility relative to other baselines. For example, on SQuAD, the distance between mapped text and original text increases from HAS's 0.238 to 0.752 while QA accuracy increases from HAS's 91.84 to 92.98. Similar trends are observed on other datasets. These consistent gains indicate that high downstream utility does not require high similarity between mapped and original text. In summary, the consistent privacy gains and the near-preserved downstream utility achieved by CROSS-MAP demonstrate that a bidirectional framework with semantic decoupling yields the best privacy-utility trade-off among the compared methods.

\rev{Table~\ref{tab:method_plausibility_judge} reports LLM-as-a-judge ratings across different methods. The judge rates all metrics on a 1--5 scale, where higher scores in the first two columns indicate better plausibility (Plaus.) and fluency (Flu.), whereas lower scores in the last column indicate fewer conspicuous obfuscation artifacts for detectability (Detect.).}

\begin{table}[H]
  \centering
  \footnotesize
  \setlength{\tabcolsep}{3pt}
  \begin{tabular}{lccc}
    \toprule
    Method & Plaus. $\uparrow$ & Fluency $\uparrow$ & Detect. $\downarrow$ \\
    \midrule
    Original text & 4.93 & 4.91 & 1.06 \\
    Pr$\varepsilon$$\varepsilon$mpt & 4.82 & 4.86 & 1.18 \\
    HAS & 4.54 & 4.61 & 1.58 \\
    InferDPT & 3.71 & 3.94 & 2.47 \\
    CROSS-MAP  & 4.45 & 4.52 & 1.72 \\
    % CROSS-MAP (near domain) & 4.45 & 4.52 & 1.72 \\
    % CROSS-MAP (far-compatible) & 4.12 & 4.31 & 2.05 \\
    \bottomrule
  \end{tabular}
  \caption{\rev{LLM-as-a-judge ratings across methods.}}
  \label{tab:method_plausibility_judge}
  \vspace{-10pt}
\end{table}

\vspace{-4pt}
\subsubsection{De-anonymization Attacks}
\vspace{-2pt}
Compared to Plain Text, Pr$\varepsilon$$\varepsilon$mpt achieves only a marginal reduction in attack success ($-1.7\%$) and confidence ($-6.4\%$), indicating that limited edits remain insufficient to remove attribute cues. In contrast, HAS reduces inference accuracy by $34.1\%$, and GPT-AA further reduces it by $40.7\%$, accompanied by lowered certainty. The strongest defense is achieved by CROSS-MAP, where inference accuracy drops significantly by $88.5\%$. This pattern suggests that the semantic decoupling in CROSS-MAP suppresses attribute cues more thoroughly, leading to both lower attacker accuracy and lower confidence.

\begin{table}[t]
  \centering
  \small
  \begin{tabular}{lccc}
    \toprule
    Method & Inference Accuracy & Inference Certainty \\
    \midrule
    Plain Text & 0.713 & 3.366 \\
    Pr$\varepsilon$$\varepsilon$mpt & 0.701 & 3.152 \\
    HAS & 0.470 & 2.196 \\
    GPT-AA     & 0.423 & 2.073 \\
    \textbf{CROSS-MAP}    & \textbf{0.082} & \textbf{1.063} \\
    \bottomrule
  \end{tabular}
  \caption{De-anonymization results via LLM attribute inference on SynthPAI~\citep{yukhymenko2024synthetic}, reporting attacker inference accuracy and confidence (0-5).}
  \label{tab:attr_infer}
  \vspace{-15pt}
\end{table}

% \vspace{-13pt}
\begin{figure*}[t]
  \centering
  \includegraphics[width=\textwidth]{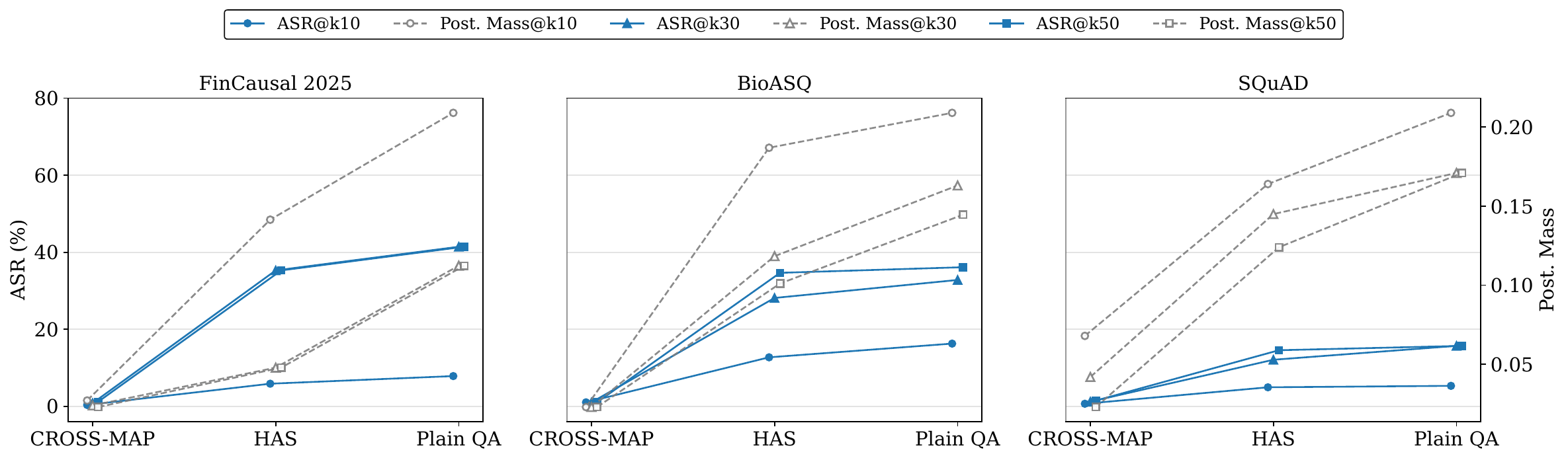}
  \vspace{-18pt}
  \caption{Correlation between span restoration accuracy and the attacker's posterior mass on protected spans.}
  % \vspace{-2pt}
  \label{fig:blackbox_asr_superisal}
\end{figure*}

\begin{figure*}[t]
  \centering
  \includegraphics[width=0.6\textwidth]{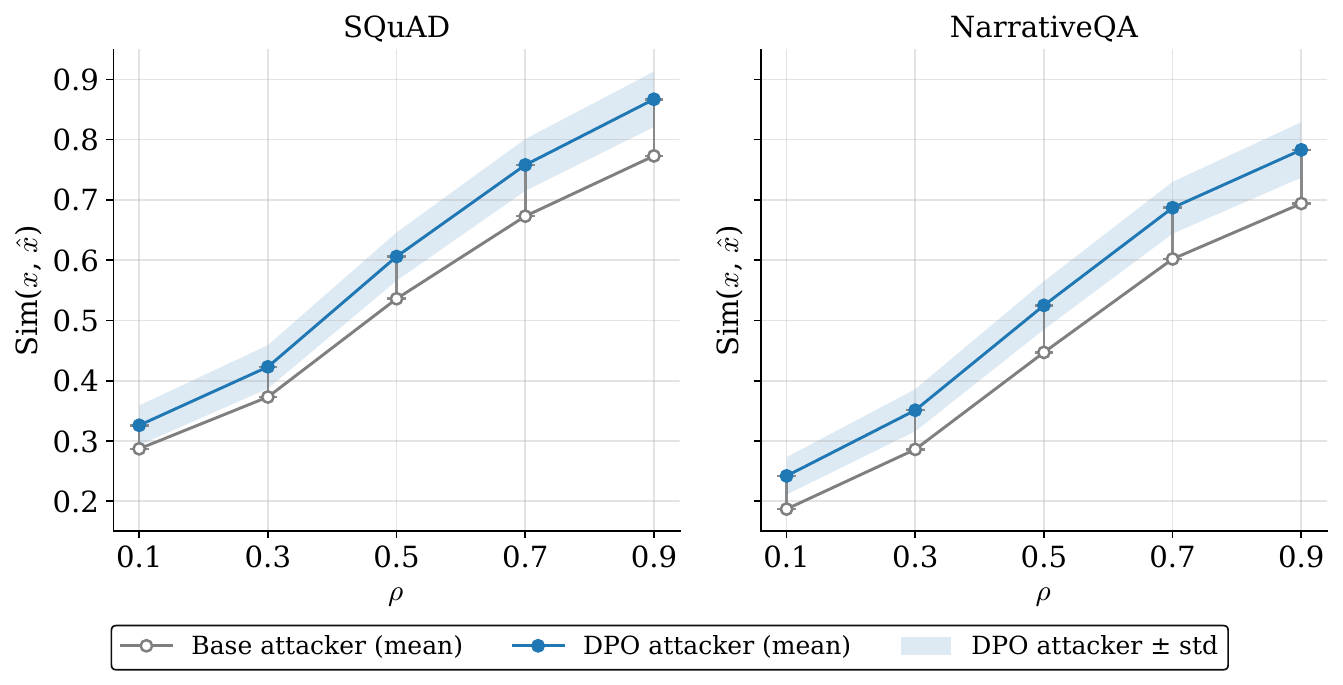}
  \caption{Optimization-based attacker's restoration rate under different dictionary leakage ratios $\rho$.}
  \label{fig:whitebox_asr}
\end{figure*}

\subsubsection{Span Restoration Attacks}
% \vspace{-3pt}

Figure~\ref{fig:blackbox_asr_superisal} provides direct empirical evidence that posterior sampling succeeds when the protected span remains a high-probability continuation under the attacker's token-level posterior. The privacy advantage of CROSS-MAP is therefore attributed to its semantic-decoupling mapping, where direct lexical and semantic anchoring between mapped text and the original protected spans is systematically weakened. As a result, the attacker's posterior over the protected spans is flattened. The true span no longer concentrates probability mass among top-$k$ candidates, yielding lower posterior mass, equivalently higher surprisal, and lower ASR consistently across datasets.

\subsubsection{Optimization-based Attacks}

Table~\ref{tab:rl_attack_squad} and Table~\ref{tab:rl_attack_narrative} show that restoration becomes easier as $\rho$ increases for both datasets, and that the DPO-based attacker consistently outperforms the base attacker without DPO optimization at every $\rho$. For example, on SQuAD, the DPO-based attacker's mean similarity score increases from $0.326$ to $0.867$ as $\rho$ goes from $0.1$ to $0.9$, with absolute mean gains over the base attacker ranging from $0.039$ to $0.094$. This systematic improvement suggests that optimization reliably pushes generations toward better reconstruction, rather than merely benefiting from occasional lucky samples. Qwen2.5-3B is used for both attackers in these two tables.

From Figure~\ref{fig:whitebox_asr}, it can be clearly observed that the restoration similarity rises monotonically with $\rho$. Furthermore, higher leakage is also accompanied by larger absolute gains. A possible reason is that higher $\rho$ reveals more dictionary information, sharpening the attacker's posterior, while DPO further concentrates probability mass on outputs that yield higher restoration similarity.

\subsubsection{Ablation Studies}

This section details an ablation analysis of key model and deployment factors from multiple perspectives: input complexity, paraphrase robustness, target-domain selection, dictionary coverage, training scale, external LLM capabilities, mapper/recoverer size, and training strategies.
\paragraph{\rev{Input complexity.}}
\rev{Three parse-based metrics are used to measure syntactic complexity: mean dependency distance (MDD), maximum dependency-tree depth (max depth), and clauses per sentence (clauses/sent.). NarrativeQA has an average MDD, max depth, and clauses per sentence of $2.40$, $12.8$, and $2.95$, respectively. SQuAD has $2.40/10.6/2.26$, BioASQ has $2.41/7.6/2.03$, and FinCausal has $1.98/7.7/1.93$. 

Compared to the example sentence, \textit{``Although the defendant had no prior convictions, the judge imposed a harsher sentence because the minor victim suffered irreversible harm''} (MDD $=3.00$, max depth $=6$, and clauses/sent. $=4.00$), the datasets used for evaluation in this paper include multi-clause and long-distance syntactic structures.}

\begin{table}[H]
  \centering
  \scriptsize
  \setlength{\tabcolsep}{2.5pt}
  \resizebox{\linewidth}{!}{%
  \begin{tabular}{lccccc}
    \toprule
    Split & MDD & Max depth & Clauses/sent. & ACC $\uparrow$ & Flu. Loss $\downarrow$ \\
    \midrule
    Q1 & 2.16 & 11.11 & 2.27 & 93.62\% & 4.722 \\
    Q2 & 2.40 & 10.75 & 2.34 & 90.52\% & 4.687 \\
    Q3 & 2.65 & 9.85 & 2.53 & 93.36\% & 4.715 \\
    Q4 & 3.24 & 10.11 & 2.90 & 94.42\% & 4.688 \\
    \bottomrule
  \end{tabular}
  }
  \caption{\rev{SQuAD complexity-quartile robustness. Examples are sorted by MDD and split into Q1--Q4, where Q4 is most complex.}}
  \label{tab:syntax_complexity_diagnostics}
  \vspace{-10pt}
\end{table}

\rev{Table~\ref{tab:syntax_complexity_diagnostics} reports the evaluation of complexity on the SQuAD dataset. The results show that increasing syntactic complexity does not monotonically degrade utility or fluency. Q4, the most complex quartile, reaches $94.42\%$ ACC and a Fluency Loss of $4.688$. This suggests that CROSS-MAP is robust to complex structures.}

\paragraph{\rev{Paraphrasing Robustness.}}
\rev{Table~\ref{tab:paraphrase_robustness} evaluates CROSS-MAP's recovery robustness against the paraphrasing of mapped text. The results show that the ACC for SQuAD drops mildly from $0.929$ to $0.877$ under WordNet synonym substitution ($p=0.30$) and to $0.896$ under T5 paraphrasing. Such robustness to paraphrasing is important for third-party LLM APIs, where downstream generation cannot be forced to copy the mapper's terms exactly.}
\begin{table}[H]
  \centering
  \footnotesize
  \setlength{\tabcolsep}{3pt}
  \begin{tabular}{lccc}
    \toprule
    Setting & ACC & Drop \\
    \midrule
    Original output & 0.929 & -- \\
    WordNet synonyms & 0.877 & 5.2\% \\
    T5 paraphrase & 0.896 & 3.3\% \\
    \bottomrule
  \end{tabular}
  \caption{\rev{Paraphrase robustness of the recoverer on SQuAD. The recoverer reasons over the dictionary and response jointly, rather than relying on exact string matching.}}
  \label{tab:paraphrase_robustness}
  \vspace{-10pt}
\end{table}

\paragraph{\rev{Domain selection.}}
\rev{To investigate the interplay between semantic domain distance, privacy preservation, and task utility, we conduct an ablation analysis in Table~\ref{tab:domain_distance_ablation}. The within-cluster row averages $d_{\mathrm{text}}$ and ACC over samples mapped between source-target domain pairs in the same cluster. The across-cluster row averages over samples mapped between pairs from different clusters. The results indicate that across-cluster mappings significantly augment $d_{\mathrm{text}}$ from $0.642$ to $0.797$, confirming that increased domain divergence strengthens semantic separation. Notably, this substantial privacy gain is achieved with only a minor reduction in utility (a $3.5\%$ decrease in accuracy, shifting from $90.4\%$ to $86.9\%$). This desirable trade-off suggests that the mapping can be executed with high quality despite a large semantic distance, provided that there is sufficient structural and relational alignment between the source and target domains to preserve the underlying discourse mechanics.}
\begin{table}[H]
  \centering
  \footnotesize
  \setlength{\tabcolsep}{3pt}
  \begin{tabular}{lccc}
    \toprule
    Pairing & $d_{\mathrm{text}}$ & ACC  \\
    \midrule
    Within-cluster & 0.642 & 90.4\%  \\
    Across-cluster & 0.797 & 86.9\%  \\
    \bottomrule
  \end{tabular}
  \caption{\rev{Target-domain distance ablation over 260 source-target pairs. Source and target domains are embedded with Sentence-BERT and grouped into semantic-domain clusters.}}
  \label{tab:domain_distance_ablation}
  \vspace{-10pt}
\end{table}

\rev{Table~\ref{tab:domain_selection_case} further provides a qualitative case study for domain selection. The same cognitive-development source text is mapped to a far equipment-operations domain and a nearer medical-training domain. Both mappings preserve the key discourse functions: a long-range temporal frame, causal connective, contrastive predication, multi-clause explanation, and comparative grounding. However, the far-domain mapping needs more local reordering to remain natural, while the near-domain mapping can preserve more surface structure. This explains the quantitative pattern in Table~\ref{tab:domain_distance_ablation}: larger semantic distance can improve privacy, but the target domain needs enough compatible relational roles to support a fluent mapped passage and ensure accurate recovery.}
\begin{table*}[t]
  \centering
  \scriptsize
  \setlength{\tabcolsep}{3pt}
  \begin{tabularx}{\textwidth}{p{0.16\textwidth}X}
    \toprule
    Mapping & Text \\
    \midrule
    Original: cognitive development &
    \textbf{Wisdom}, or the capacity for \textbf{insight and judgment} that is developed through \textbf{experience}, increases between the \textbf{ages of fourteen and twenty-five}, then levels off. Thus, it is during the \textbf{adolescence-adulthood transition} that individuals acquire the type of \textbf{wisdom} typically associated with age. \textbf{Wisdom} is not the same as \textbf{intelligence}: adolescents do not improve substantially on \textbf{IQ tests} because their scores are relative to others in their \textbf{same age group}, and relative standing usually does not change --- everyone matures at approximately the same rate in this way. \\
    \midrule
    Far domain: equipment operations &
    \textbf{Diagnostic maturity}, or the ability to \textbf{interpret sensor patterns and make maintenance decisions} refined through \textbf{operating hours}, improves between \textbf{300 and 1,800 operating hours}, then plateaus. Thus, units tend to develop the level of \textbf{diagnostic maturity} typically attributed to long service life during the transition from \textbf{commissioning to routine operation}. \textbf{Diagnostic maturity} is not the same as \textbf{rated capacity}: newly commissioned units do not show large gains on \textbf{standardized bench tests} because their results are normalized against \textbf{comparable models from the same production batch}, and relative ranking typically remains stable --- most units stabilize at roughly similar rates under similar duty cycles on that metric. \\
    \midrule
    Near domain: medical training &
    \textbf{Professional judgment}, or the capacity for \textbf{clinical insight and decision-making} that is developed through \textbf{supervised practice}, increases between the \textbf{first and final years of training}, then levels off. Thus, it is during the \textbf{trainee-practitioner transition} that clinicians acquire the type of \textbf{professional judgment} typically associated with seniority. \textbf{Professional judgment} is not the same as \textbf{medical knowledge}: trainees do not improve substantially on \textbf{standardized exams} because their scores are relative to others in their \textbf{same training cohort}, and relative standing usually does not change --- everyone advances at approximately the same rate in this way. \\
    \bottomrule
  \end{tabularx}
  \caption{\rev{Domain-selection case study. Bold spans mark key source entities and their mapped counterparts.}}
  \label{tab:domain_selection_case}
  \vspace{-10pt}
\end{table*}

\paragraph{\rev{Dictionary coverage.}}
\rev{Table~\ref{tab:dictionary_coverage_ablation} ablates recovery behavior as dictionary coverage decreases. Recovery similarity decreases from $0.860$ at approximately $0.8$ coverage to $0.816$ at approximately $0.5$ coverage, and remains $0.788$ at approximately $0.2$ coverage. The trend is smooth rather than catastrophic because the recoverer can use the mapped response and task context to infer some missing links. However, the ACC drop from $91.2\%$ to $86.2\%$ also shows that the dictionary is not optional. It is the mechanism that keeps semantic decoupling recoverable for downstream use.}

\begin{table}[H]
  \centering
  \small
  \begin{tabular}{lcc}
    \toprule
    Dictionary coverage & Recovery Sim. & ACC \\
    \midrule
    $c_{\mathrm{dict}}\approx0.8$ & 0.860 & 91.2\% \\
    $c_{\mathrm{dict}}\approx0.5$ & 0.816 & 88.7\% \\
    $c_{\mathrm{dict}}\approx0.2$ & 0.788 & 86.2\% \\
    \bottomrule
  \end{tabular}
  \caption{\rev{Dictionary-coverage ablation on SQuAD.}}
  \label{tab:dictionary_coverage_ablation}
  \vspace{-10pt}
\end{table}

\paragraph{\rev{Training scale.}}
\rev{Table~\ref{tab:sft_scaling} ablates SFT data size for the 14B model. Moving from the base model to 200 and 500 SFT examples already improves both privacy distance and utility, indicating that the desired behavior can be learned from a relatively small synthetic set. The largest utility gain appears by 700 examples, where ACC reaches $94.59$ and recovery similarity reaches $0.945$. Additional data from 1000 to 2000 examples yields only marginal changes in $d_{\mathrm{text}}$ and slightly lower recovery similarity, suggesting that later gains are dominated by DPO and sampling strategy rather than by simply scaling SFT data.}

\begin{table}[H]
  \centering
  \small
  \begin{tabular}{lccc}
    \toprule
    Checkpoint & \# samples & $d_{\mathrm{text}}$ & Sim. / ACC \\
    \midrule
    Base & 0 & 0.541 & 0.901 / 88.96 \\
    SFT-200 & 200 & 0.634 & 0.912 / 90.38 \\
    SFT-500 & 500 & 0.651 & 0.933 / 92.57 \\
    SFT-700 & 700 & 0.663 & 0.945 / 94.59 \\
    SFT-1000 & 1000 & 0.667 & 0.943 / 94.34 \\
    SFT-1500 & 1500 & 0.670 & 0.940 / 94.35 \\
    SFT-2000 & 2000 & 0.673 & 0.937 / 94.51 \\
    \bottomrule
  \end{tabular}
  \caption{\rev{SFT training-data scaling study. Gains largely saturate around 700--1000 samples.}}
  \label{tab:sft_scaling}
  \vspace{-10pt}
\end{table}

\paragraph{\rev{External LLMs.}}
\rev{Table~\ref{tab:external_llm_ablation} compares local-only inference, unprotected external-LLM access, and external-LLM access through CROSS-MAP. The SQuAD rows use a 3B local mapper/recoverer and a Qwen-2.5-14B model as the external LLM, showing that CROSS-MAP remains useful even when the external model is a local open-weight model rather than a commercial API. The remaining rows use GPT-5 as the external LLM. In both regimes, CROSS-MAP retains most of the external model's utility while keeping original-domain sensitive content local.}

\begin{figure}[H]
  \centering
  \includegraphics[width=0.95\linewidth]{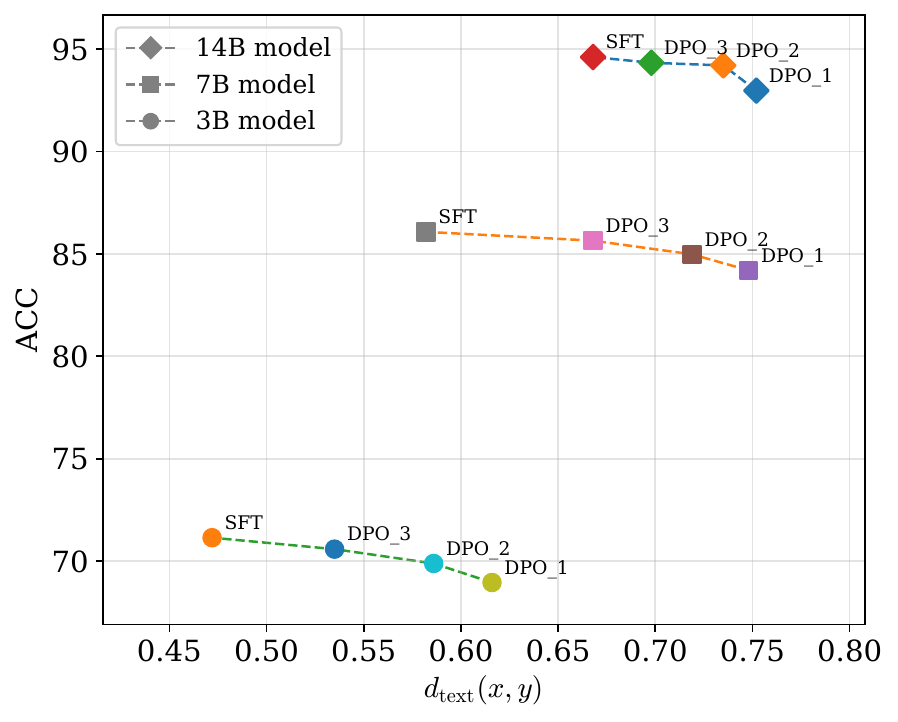}
  \vspace{-8pt}
  \caption{Ablation study of model scale and training objectives on utility--privacy trade-off. Dashed lines connect checkpoints within the same model size. \rev{$\text{DPO}\_3$, $\text{DPO}\_2$, and $\text{DPO}\_1$ denote DPO checkpoints from epochs 1, 3, and 5, respectively, showing increasingly aggressive preference optimization toward privacy-oriented mapping.}}
  \label{fig:squad_ablation_tradeoff}
  \vspace{-10pt}
\end{figure}

\paragraph{Model Size and Training Strategy.} 
Table~\ref{tab:ablation_results} ablates training strategy (SFT-only vs.\ DPO) and mapper~/~recoverer size. Different DPO variants provide adjustable privacy-utility trade-offs for each model size. For the 14B model, as the DPO variant moves from $\text{DPO}_3 \rightarrow \text{DPO}_2 \rightarrow \text{DPO}_1$, privacy distances increase substantially, at the cost of slightly reduced QA utility. The strongest privacy setting $\text{DPO}_1$ (14B) raises $d_{\text{text}}$ to $0.752$ while keeping ACC at $92.98$, i.e., only a $-1.85\%$ drop from Plain QA ($94.73$).

The SFT-only model provides weaker privacy than $\text{DPO}_1$ and a closer approximation of the plain QA quality. However, SFT-only exhibits substantially worse JSON-structure validity $r_{\text{json}}$ than DPO variants, indicating that preference optimization also stabilizes format consistency, which is critical for reliable mapping and recovery. 
Figure~\ref{fig:squad_ablation_tradeoff} visualizes the same model-size and training-stage trade-off, with DPO checkpoints moving toward stronger privacy at small utility cost.

% \clearpage

\begin{table*}[!t]
  \centering
  \scriptsize
  \setlength{\tabcolsep}{3pt}
  \begin{tabular}{lllccc}
    \toprule
    Setting & Dataset & Model / pipeline & EM & F1  \\
    \midrule
    Local only & SQuAD & Qwen-2.5-3B local & 33.82 & 42.77  \\
    External, no protection & SQuAD & Qwen-2.5-14B external & 54.28 & 62.51  \\
    CROSS-MAP & SQuAD & 3B mapper/recoverer + 14B external & 47.71 & 55.96  \\
    \midrule
    Local only & BioASQ & Qwen-2.5-14B local & 12.63 & 23.35  \\
    External, no protection & BioASQ & GPT-5 external & 24.98 & 46.43  \\
    CROSS-MAP & BioASQ & 14B mapper/recoverer + GPT-5 external & 22.36 & 44.57 \\
    \midrule
    Local only & FinCausal & Qwen-2.5-14B local & 48.69 & 60.99 \\
    External, no protection & FinCausal & GPT-5 external & 82.32 & 91.74  \\
    CROSS-MAP & FinCausal & 14B mapper/recoverer + GPT-5 external & 81.57 & 90.26  \\
    \midrule
    Local only & NarrativeQA & Qwen-2.5-14B local & 16.72 & 39.91 \\
    External, no protection & NarrativeQA & GPT-5 external & 48.75 & 72.94  \\
    CROSS-MAP & NarrativeQA & 14B mapper/recoverer + GPT-5 external & 47.12 & 71.27  \\
    \bottomrule
  \end{tabular}
  \caption{\rev{External-LLM ablation. CROSS-MAP preserves most of the stronger external model's utility while keeping original-domain sensitive content local.}}
  \label{tab:external_llm_ablation}
  \vspace{-10pt}
\end{table*}

\begin{table*}[t]
  \centering
  \scriptsize
  \begin{tabular}{crrrrr}
    \toprule
    $\rho$ & Base attacker (mean) & DPO attacker (mean) & Gain (mean) & DPO attacker (best) & Std.\ (mean) \\
    \midrule
    0.1 & 0.287 & 0.326 & 0.039 & 0.397 & 0.033 \\
    0.3 & 0.373 & 0.423 & 0.050 & 0.494 & 0.036 \\
    0.5 & 0.536 & 0.606 & 0.070 & 0.679 & 0.040 \\
    0.7 & 0.673 & 0.758 & 0.085 & 0.831 & 0.043 \\
    0.9 & 0.773 & 0.867 & 0.094 & 0.935 & 0.046 \\
    \bottomrule
  \end{tabular}
  \caption{Performance of optimization-based white-box attacker on SQuAD, measuring similarity between reconstructed text and original text across leakage ratios $\rho$.}
  \label{tab:rl_attack_squad}
\end{table*}

\begin{table*}[t]
  \centering
  \scriptsize
  \begin{tabular}{crrrrr}
  \toprule
  $\rho$ & Base attacker (mean) & DPO attacker (mean) & Gain (mean) & DPO attacker (best) & Std.\ (mean) \\
  \midrule
  0.1 & 0.187 & 0.242 & 0.055 & 0.295 & 0.039 \\
  0.3 & 0.286 & 0.351 & 0.065 & 0.418 & 0.045 \\
  0.5 & 0.447 & 0.525 & 0.078 & 0.607 & 0.049 \\
  0.7 & 0.602 & 0.687 & 0.085 & 0.772 & 0.051 \\
  0.9 & 0.694 & 0.783 & 0.089 & 0.832 & 0.046 \\
  \bottomrule
  \end{tabular}
  \caption{Performance of optimization-based white-box attack on NarrativeQA, measuring similarity between reconstructed text and original text across leakage ratios $\rho$.}
  \label{tab:rl_attack_narrative}
  \end{table*}

\begin{table*}[t]
  \centering
  \scriptsize
  \setlength{\tabcolsep}{3.5pt}
  \begin{tabular}{lccc|cccc}
    \toprule
    & \multicolumn{3}{c|}{\textbf{Downstream QA utility}}
    & \multicolumn{4}{c}{\textbf{Mapping quality}} \\
    \cmidrule(lr){2-4}\cmidrule(lr){5-8}
    Method
    & {EM} & {F1} & {ACC}
    & {Sim.} & {$d_{\text{dict}}$} & {$d_{\text{text}}$} & {$r_{\text{json}}$} \\
    \midrule
    Plain QA        & 76.4  & 84.10 & 94.73 & 1.000 & 0.000 & 0.000 & 1.000 \\
    \noalign{\vskip 0.3em}
    \cdashline{1-8}
    \noalign{\vskip 0.5em}

    \textbf{DPO\_1 (14B)}
                    & \textbf{74.94} & \textbf{83.22} & \textbf{92.98} & \textbf{0.826} & \textbf{0.753} & \textbf{0.752} & \textbf{1.000} \\
    DPO\_2 (14B)    & 75.30 & 83.31 & 94.21 & 0.892 & 0.711 & 0.735 & 1.000 \\
    DPO\_3 (14B)    & 75.72 & 83.75 & 94.34 & 0.898 & 0.704 & 0.698 & 1.000 \\
    \textbf{SFT-only (14B)}  & \textbf{75.88} & \textbf{83.82} & \textbf{94.62} & \textbf{0.948} & \textbf{0.612} & \textbf{0.668} & \textbf{0.936} \\
    Base Model (14B)& 69.85 & 78.90 & 88.96 & 0.901 & 0.566 & 0.541 & 0.988 \\

    \noalign{\vskip 0.3em}
    \cdashline{1-8}
    \noalign{\vskip 0.5em}

    \textbf{DPO\_1 (7B)}
                    & \textbf{70.21} & \textbf{81.88} & \textbf{84.20} & \textbf{0.829} & \textbf{0.758} & \textbf{0.748} & \textbf{1.000} \\
    DPO\_2 (7B)     & 70.74 & 82.31 & 84.98 & 0.844 & 0.736 & 0.719 & 1.000 \\
    DPO\_3 (7B)     & 71.86 & 83.05 & 85.65 & 0.861 & 0.708 & 0.668 & 1.000 \\
    \textbf{SFT-only (7B)}   & \textbf{71.20} & \textbf{82.74} & \textbf{86.08} & \textbf{0.901} & \textbf{0.604} & \textbf{0.582} & \textbf{0.918} \\
    Base Model (7B) & 62.14 & 76.82 & 73.44 & 0.872 & 0.522 & 0.446 & 0.972 \\

    \noalign{\vskip 0.3em}
    \cdashline{1-8}
    \noalign{\vskip 0.5em}

    \textbf{DPO\_1 (3B)}
                    & \textbf{54.21} & \textbf{69.35} & \textbf{68.95} & \textbf{0.775} & \textbf{0.624} & \textbf{0.616} & \textbf{1.000} \\
    DPO\_2 (3B)     & 54.88 & 69.87 & 69.88 & 0.788 & 0.603 & 0.586 & 1.000 \\
    DPO\_3 (3B)     & 56.42 & 70.96 & 70.58 & 0.803 & 0.575 & 0.535 & 1.000 \\
    \textbf{SFT-only (3B)}   & \textbf{55.69} & \textbf{70.34} & \textbf{71.14} & \textbf{0.846} & \textbf{0.401} & \textbf{0.472} & \textbf{0.782} \\
    Base Model (3B) & 46.10 & 64.88 & 60.73 & 0.781 & 0.365 & 0.322 & 0.958 \\
    \bottomrule
  \end{tabular}
  \caption{Ablation study on SQuAD, comparing DPO-tuned and SFT-only models of different sizes in terms of QA utility, privacy distances, and JSON-structure validity. LLAMA-8B model is used for downstream QA.}
  \label{tab:ablation_results}
\end{table*}

\begin{figure*}
  \centering
  \includegraphics[width=\textwidth]{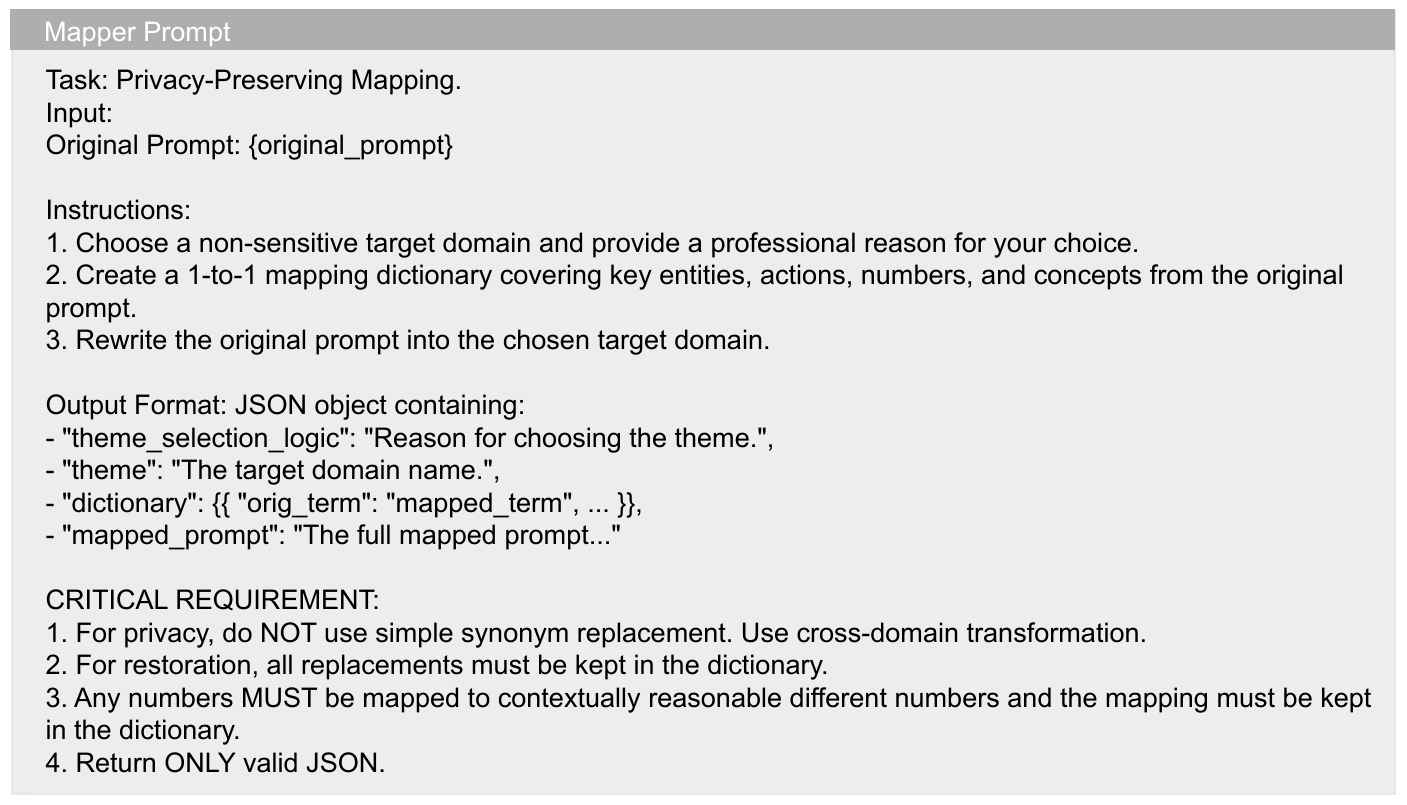}
  \caption{Mapper prompt.}
  \label{fig:mapper_prompt}
\end{figure*}

\begin{figure*}
  \centering
  \includegraphics[width=\textwidth]{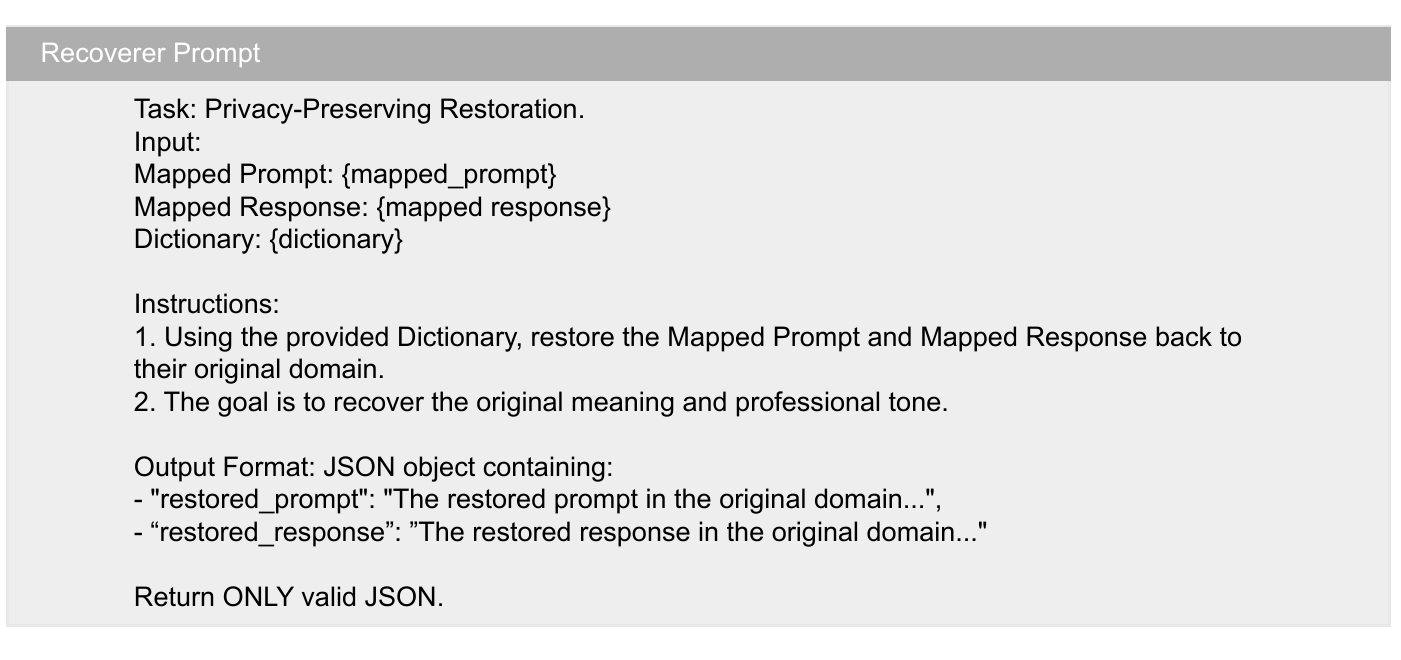}
  \caption{Recoverer prompt.}
  \label{fig:recoverer_prompt}
\end{figure*}

\begin{table*}[t]
  \centering
  \setlength{\tabcolsep}{4pt}
  \renewcommand{\arraystretch}{1.15}
  \footnotesize
  \begin{tabularx}{\textwidth}{
  >{\raggedright\arraybackslash}p{0.15\textwidth}
  >{\raggedright\arraybackslash}p{0.26\textwidth}
  >{\raggedright\arraybackslash}p{0.26\textwidth}
  >{\raggedright\arraybackslash}p{0.26\textwidth}
  }
  \toprule
  Method & Original text & Mapped text & Recovered text  \\
  \midrule
  
  CROSS-MAP & \textcolor{blue}{Fiscal pressures} on \textcolor{blue}{Government} as a result of \textcolor{blue}{reduced revenues} due to \textcolor{blue}{lower oil prices} is evident in the increasing number of \textcolor{blue}{audits} being undertaken to identify areas of \textcolor{blue}{non-compliance}. The \textcolor{blue}{Group} remains committed to \textcolor{blue}{maintaining the highest levels of compliance} and works closely with \textcolor{blue}{local regulatory authorities}.
  &
  \textcolor{green}{Operational strain} on the \textcolor{green}{hospital network} as a result of \textcolor{green}{declining patient volume} due to \textcolor{green}{reduced insurance reimbursements} is evident in the increasing number of \textcolor{green}{quality assurance reviews} being undertaken to identify \textcolor{green}{deviations from clinical protocols}. The \textcolor{green}{hospital network} remains committed to \textcolor{green}{adhering strictly to regulatory standards} and works closely with \textcolor{green}{health department oversight bodies}.
  &
  \textcolor{blue}{Fiscal pressures} on the \textcolor{blue}{Government} as a result of \textcolor{blue}{reduced revenues} due to \textcolor{blue}{lower oil prices} is evident in the increasing number of \textcolor{blue}{audits} being undertaken to identify \textcolor{blue}{non-compliance} with regulatory standards. The \textcolor{blue}{Government} remains committed to \textcolor{blue}{maintaining the highest levels of compliance} and works closely with \textcolor{blue}{local regulatory authorities}.  \\
  \midrule
  HAS~\citep{chen2023hideseekhaslightweight} & Fiscal pressures on \textcolor{blue}
  {Government} as a result of reduced revenues due to lower oil prices is evident in the increasing number of audits being undertaken to identify areas of non-compliance. The \textcolor{blue}{Group} remains committed to maintaining the highest levels of compliance and works closely with local regulatory authorities. & Fiscal pressures on \textcolor{green}{Administration} as a result of reduced revenues due to lower oil prices is evident in the increasing number of audits being undertaken to identify areas of non-compliance. The \textcolor{green}{Committee} remains committed to maintaining the highest levels of compliance and works closely with local regulatory authorities. & Fiscal pressures on \textcolor{blue}{Government} as a result of reduced revenues due to lower oil prices is evident in the increasing number of audits being undertaken to identify areas of non-compliance. The \textcolor{blue}{Committee} remains committed to maintaining the highest levels of compliance and works closely with local regulatory authorities. \\
  \midrule 

  InferDPT~\citep{tong2025inferdpt} & In \textcolor{blue}{2018} the majority of employees across the \textcolor{blue}{Group} have received average salary \textcolor{blue}{increases} ranging from 2.0$\%$–3.25$\%$, dependent on geographical location with the principal exception being those employees based in Brazil, Latin America and China where, due to inflation, current market salary increases are higher. & In \textcolor{green}{2019} the majority of employees across the \textcolor{green}{Company} have received average salary \textcolor{green}{rises} ranging from 2.0$\%$–3.25$\%$, dependent on geographical location with the principal exception being those employees based in Brazil, Latin America and China where, due to inflation, current market salary rises are higher. & In \textcolor{blue}{2019} the majority of employees across the \textcolor{blue}{Company} have received average salary \textcolor{blue}{rises} ranging from 2.0$\%$–3.25$\%$, dependent on geographical location with the principal exception being those employees based in Brazil, Latin America and China where, due to inflation, current market salary rises are higher. \\
  \midrule 

  Pr$\varepsilon$$\varepsilon$mpt~\citep{chowdhury2025pr} & The Committee also reviewed the impact of the reduction in US federal tax rates as a result of tax reform in the US, which resulted in a reduction of deferred tax balances of \textcolor{blue}{$617$} million.
  &
  The Committee also reviewed the impact of the reduction in US federal tax rates as a result of tax reform in the US, which resulted in a reduction of deferred tax balances of \textcolor{green}{$593$} million.
  &
  The Committee also reviewed the impact of the reduction in US federal tax rates as a result of tax reform in the US, which resulted in a reduction of deferred tax balances of \textcolor{blue}{$617$} million. \\
  \bottomrule
  \end{tabularx}
  \caption{Examples of different mapping methods.}
  \label{tab:mapping_examples}
  \end{table*}

\end{document}